%% file: main.tex
\documentclass[lettersize,journal,twoside]{IEEEtran}
\input{utils/preamble}

\input{utils/macros}

\begin{document}

\title{Physically Consistent SINDy (Sparse Identification of Nonlinear Dynamics) for Microgrid Identification and Real-Time Frequency Control}

\author{Mohan Du~\orcidlink{0000-0002-5465-8161},~\IEEEmembership{Student Member,~IEEE}, 
Jiayi Lai, 
Rong-Peng Liu~\orcidlink{0000-0002-7648-3411},~\IEEEmembership{Member,~IEEE}, 
and Xiaozhe Wang~\orcidlink{0000-0002-1887-0990},~\IEEEmembership{Senior Member,~IEEE}
\thanks{Corresponding author: Xiaozhe Wang (e-mail: xiaozhe.wang2@mcgill.ca). 
This work was supported by Natural Sciences and Engineering Research
Council (NSERC) and the Fonds de Recherche du Québec-Nature et technologies under Grants FRQ-NT PR-298827, FRQ-NT 2023-NOVA-314338, FRQ-NT 344058 and Prix Grands Sages Ashok Vijh en électrochimie.
The authors acknowledge the use of ChatGPT (OpenAI), for grammar correction, language refinement, and stylistic suggestions. All original content, ideas, and technical contributions are the sole work of the authors.}}

\newcommand{\IEEEcopyrightyear}{2026}\IEEEaftertitletext{%
    \begin{center}
    \parbox{0.96\textwidth}{%
        \centering
        \footnotesize
        \textcopyright\ \IEEEcopyrightyear\ IEEE.
        Personal use of this material is permitted.
        Permission from IEEE must be obtained for all other uses, in any current or future media, including reprinting/republishing this material for advertising or promotional purposes, creating new collective works, for resale or redistribution to servers or lists, or reuse of any copyrighted component of this work in other works.
        \par
    }
    \end{center}
    \vspace{0.5\baselineskip}
}

\maketitle

\begin{abstract}
This paper proposes PC-SINDYc, a novel framework for the identification and frequency control of microgrids (MGs) with distributed energy resources. By leveraging physics-guided library construction, total least squares regression, and random sample consensus, the regression algorithm of PC-SINDYc robustly identifies the true frequency dynamics of MGs from phasor measurement unit (PMU) data, considering noise, delays, and constraint activations. Based on the identified model, the PC-SINDYc framework further incorporates a model predictive controller (MPC) for real-time frequency control. We prove that, under mild conditions, PC-SINDYc ensures asymptotic stability of the MG.  
Simulations on 4-bus and 13-bus MGs demonstrate that PC-SINDYc effectively controls MG's frequency across various disturbances unseen during the offline identification, outperforming PI controllers, conventional SINDYc, and state-of-the-art reinforcement learning methods.

\end{abstract}

\begin{IEEEkeywords}
Distributed energy resource (DER), 
frequency control, 
grid following, 
grid forming, 
microgrid (MG), 
sparse identification of nonlinear dynamics with control (SINDYc).
\end{IEEEkeywords}

\section{Introduction}\label{sec:introduction}

\IEEEPARstart{M}{icrogrids} (MGs) play a key role in integrating distributed energy resources (DERs), such as solar photovoltaic (PV) and wind power, into main power systems while enhancing the reliability of renewable energy utilization \cite{gong2023a}.
However, MGs face stability challenges due to the stochastic nature of DERs and inertia-less converter interfaces \cite{gong2023d}.
Hierarchical control, comprising primary, secondary, and tertiary controls, is a widely adopted scheme to enhance MG stability \cite{she2023}. 
The primary control stabilizes local voltage and frequency immediately after disturbances; the secondary control eliminates the resulting steady-state deviations; and the tertiary control optimizes long-term power flow for economic operation \cite{she2023}.
This paper focuses on secondary control, which restores MG frequency after large disturbances, as primary control alone cannot fully address frequency deviations \cite{gong2023a}.

A wide range of secondary control methods has been proposed for MGs, broadly classified into model-based and {data-driven} %
approaches. {Model-based} methods, such as those using small-signal analysis \cite{zhang2016} or robust control design \cite{lai2019}, rely on accurate parameters of DERs and network components. In practice, however, such information is often unavailable due to heterogeneous converter types, diverse control modes, and time-varying MG topologies \cite{gong2023a}. To eliminate the reliance on models and parameter values, {data-driven methods have been widely explored. Early approaches include local or distributed heuristic controllers} %
such as proportional–integral (PI) control \cite{savaghebi2012} and consensus-based \cite{yang2024} controls.
However, because these local controllers lack system-wide dynamic awareness, they may exhibit suboptimal performance or even instability when the grid topology or operating conditions change. 

To achieve adaptability to grid changes without relying on detailed system information,  {global data-driven methods} have attracted growing interest \cite{madani2021, ye2010, deshmukh2020, ma2023b, she2023}. These methods identify system dynamics directly from global measurement data and design controllers based on the identified models, thus eliminating the need for prior knowledge and enabling online adaptation via recursive identification. {Linear identification} techniques \cite{madani2021, ye2010} locally linearize MG dynamics, but their accuracy decreases outside the linearization region, e.g., under large disturbances. To capture nonlinear behaviors, {{black-box data-driven methods}} %
such as fuzzy logic \cite{deshmukh2020}, neural networks (NNs) \cite{ma2023b}, and reinforcement learning (RL) \cite{she2023}, have been implemented for secondary control. However, fuzzy control scales poorly \cite{she2023}; NN-based methods are often tailored for prediction rather than control; and RL typically requires extensive, high-quality data and may lack physical interpretability and stability guarantees \cite{gong2023d}. Moreover, their performance is not guaranteed when the online operating conditions or network topology deviate from those encountered during training. 

To address these limitations, {structure-informed data-driven methods \cite{xu2023, yeung2017, ma2024, kandaperumal2022, gong2023, gong2023a} have gained attention.} %
Many of these are based on Koopman operator theory \cite{Brunton_Kutz_2019k}. This theory demonstrates that a nonlinear dynamical system can be represented via an infinite-dimensional linear operator acting on a space of candidate functions (observables) \cite{Brunton_Kutz_2019k}. 
However, practical implementations require finite-dimensional approximations. Along these lines, deep-learning-based Koopman methods \cite{xu2023, yeung2017} learn observables implicitly via NNs. However, because these observables lack explicit physical meaning, such methods generally lack performance guarantees and are therefore remain closer to {black-box data-driven methods}.
In contrast, analytical library constructed from domain knowledge \cite{ma2024} improves interpretability but %
requires extensive knowledge of grid and converter parameters, which limits its practicality. 
\cite{gong2023, gong2023a, kandaperumal2022} include polynomial, trigonometric, and lagged time-series functions of voltage angles and frequencies as candidate functions in the library and applied it for identification and control in DER-based MGs. Leveraging sparse identification of nonlinear dynamics (SINDy) \cite{brunton2016a}, \cite{nandakumar2023} constructs a polynomial-trigonometric library to predict MG voltages and frequencies. 
Although these physics-inspired, yet intuitively constructed libraries improve  interpretability and may better capture MG dynamics, two main challenges remain. First, they do not necessarily identify the \textit{true} underlying physical model, limiting global accuracy and generalizability. This can lead to degraded prediction and control performance under disturbances never seen during identification (e.g., \cite{kandaperumal2022,nandakumar2023}) or requires substantial real-time computation to track fast dynamics (e.g., {\cite{gong2023, gong2023a}}). Second, the original SINDy is sensitive to noise, and its identification accuracy can deteriorate when converter control limiters become active.

To address these challenges, we propose a framework, named \emph{physically consistent} SINDy with control (PC-SINDYc) to identify the \textit{true} MG dynamics model for frequency dynamics and enable real-time frequency control. Our key advantages are:\\

\noindent$\bullet$ \textit{Physics-guided library construction}: We propose a systematic and physics-guided method for constructing the candidate function library that accounts for both grid-forming (GFM) and grid-following (GFL) converters under different control modes. This ensures that the identified SINDy model remains equivalent to the \emph{true} governing equations even under large-signal transients. %
\\
\noindent$\bullet$ \textit{Minimally intrusive and purely data-driven:} PC-SINDYc requires only small-signal excitation of the system ($\pm0.001$ to $\pm0.01$~p.u.) and 10~s of phasor measurement unit (PMU) data, even in the presence of noise, to identify the MG frequency dynamics model. It does not require parameters of network or converters.\\
\noindent$\bullet$ \textit{Robust identification under noise and constraints:} PC-SINDYc enhances the original SINDy by integrating total least squares (TLS) regression~\cite{vanHuffel1991} to mitigate regression dilution under measurement noise, and random sample consensus (RANSAC)~\cite{fischler1981} to automatically reject corrupted data affected by constraint activations in converters. Moreover, we provide a \textit{probabilistic convergence guarantee} to demonstrate its computational efficiency. \\
\noindent$\bullet$ \textit{Guaranteed control performance:} Based on the identified model, we design an online model predictive controller (MPC) that can perform frequency control under large disturbances never seen during the offline identification/training process. We prove that, under mild conditions, PC-SINDYc guarantees \textit{asymptotic stability} of the MG. 

This paper is organized as follows.
Section~\ref{sec:DER-dynamics} introduces frequency dynamics in DER-based MGs. 
Section~\ref{sec:sindy} reviews the conventional SINDy method. 
Section~\ref{sec:construct-library} constructs an analytically derived candidate function library. 
Section~\ref{sec:methodology} details the data acquisition methodology, the PC-SINDYc framework, and the proof of asymptotic stability. 
Section~\ref{sec:simulation} presents case studies. 
Section~\ref{sec:conclusion} concludes the paper.

\textbf{Notations.} 
We denote by $\bbn$ the set of natural numbers and $\bbn_{0:n}\coloneqq \{0,1,\ldots,n\}$.
Boldface letters, e.g., $\bx\in \mathbb{R}^{n_x}$, denote column vectors, where $n_{(\cdot)}$ denotes the dimension of input vector. 
For a time-dependent vector $\bx_t \in \mathbb{R}^{n_x}$, we omit the time index and write $\bx$ when it is clear from context. 
We denote by $\bx^{+}$ the value of $\bx$ at the next sampling instant in a discrete-time system.
For any two vectors $\bx,\by$, $[\bx;\by] \ceq [\bx^\top, \by^\top]^\top$ denotes their column concatenation.
$\bx[i]$ denotes the $i$-th component of vector $\bx$, and $\bx[\mathcal{I}] \ceq (\bx[i])_{i \in \mathcal{I}}$ for index set $\mathcal{I}$. 
We denote by $\bX[i,j]$ the $(i,j)$-th entry of matrix $\bX$, and by $\bX[i,:]$ and $\bX[:,j]$ its $i$-th row and $j$-th column, respectively; furthermore, $\bX[\mathcal{I},:] \ceq (\bX[i,:])_{i \in \mathcal{I}}$ and $\bX[:,\mathcal{J}] \ceq ((\bX[:,j]^\top)_{j \in \mathcal{J}})^\top$. 
$\|\cdot\|$ denotes the 2-norm of a vector or the induced 2-norm for matrices, and $\|\bx\|_Q \ceq \sqrt{\bx^\top Q\bx}$ with $Q\succeq 0$. 

\section{Preliminary: Frequency Dynamics of DER-Based MGs}\label{sec:DER-dynamics}
This section examines the general modeling and control modes of the two widely used types of DERs: GFM and GFL converters. We consider an MG comprising both GFM and GFL converters, distributed across the sets of buses \(\cn_\mgfm\) and \(\cn_\mgfl\), respectively, with the overall bus set defined as \(\cn \ceq \cn_\mgfm \cup \cn_\mgfl\).

\subsection{GFM Converter}
A GFM converter is responsible for establishing the MG frequency \cite{she2023}. 
{Two widely used GFM control modes, droop control and virtual synchronous generator (VSG) control~\cite{she2023}, are shown in Figs.~\ref{fig:GFM_blocks}(b) and (c) for units $i \in \cn_{\mgfm}^\mdroop$ and $i \in \cn_{\mgfm}^\mvsg$, respectively, where $\cn_\mgfm = \cn_\mgfm^\mdroop \cup \cn_\mgfm^\mvsg$.}

\begin{figure}[htbp]
    \centering
    \includegraphics[width=0.92\linewidth]{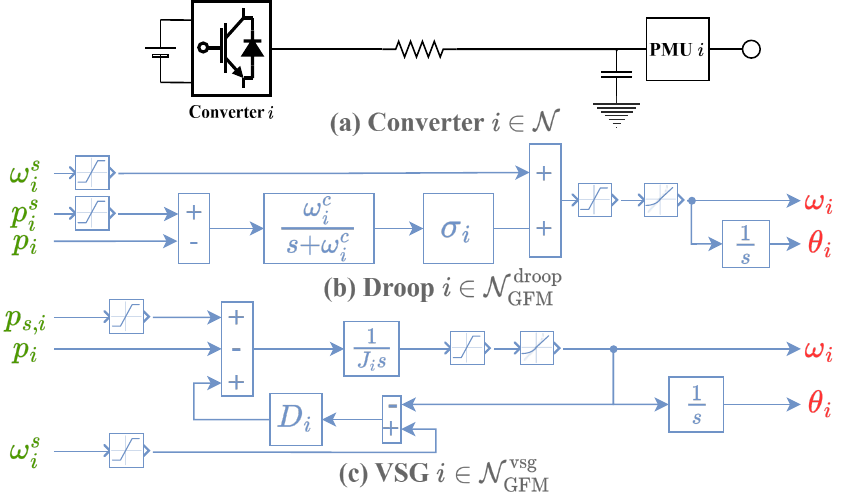}
    \caption{(a) Converter and PMU $i$; (b) Droop control and (c) VSG control of GFM converters \cite{she2023}}
    \label{fig:GFM_blocks}
\end{figure}

\subsubsection{Droop Control:}
The droop control adjusts the output frequency $\omega_i$ proportionally to the active power deviation $p_i-p^{s}_{i}$ from its setpoint $\omega^{s}_{i}$, where $p_i$ and $p^{s}_{i}$ are the output active power and its setpoint, respectively. 
A first-order low-pass filter is often implemented to attenuate high-frequency noise in the measured output power \cite{Yazdani2010}, as shown in Fig. \ref{fig:GFM_blocks}(b), leading to the dynamics:
\begin{subequations}\label{equ:gfm-droop}
    \begin{flalign}
         \dot{\theta}_{i}&=\omega_{i},  && {i\in\cn_\mgfm^\mdroop,} \label{equ:gfm-droop-1}\\
         \dot{\omega}_{i}&= \omega^{c}_{i} (\omega^{s}_{i}-\omega_{i}) + \omega^{c}_{i} \sigma_{i} (p^{s}_{i}-p_{i}), && {i\in\cn_\mgfm^\mdroop}, \label{equ:gfm-droop-2}
    \end{flalign}
\end{subequations}
where \(\sigma_{i}\) is the slope of the droop curve and $\omega^c_i$ is the cut-off frequency of the first-order low-pass filter. 

\subsubsection{VSG Control:}
The dynamics of a VSG control, as shown in Fig. \ref{fig:GFM_blocks}(c), can be represented as:
\begin{subequations}\label{equ:gfm-vsg} 
    \begin{flalign}
        \dot{\theta}_{i}& =\omega_{i}, && {i\in\cn_\mgfm^\mvsg}, \label{equ:gfm-vsg-1}\\
        \dot{\omega}_{i}& = {D_i}/{J_{i}}\cdot (\omega^{s}_{i}-\omega_{i}) + {1}/{J_{i}}\cdot(p^{s}_{i}- p_{i}), \!\!\!&& {i\in\cn_\mgfm^\mvsg}, \label{equ:gfm-vsg-2}
    \end{flalign}
\end{subequations}
where the damping coefficient \(D_{i}\) and inertia constant \(J_{i}\) mimic the damping factor and rotational inertia of synchronous generators, respectively. If we let
    \begin{align}
        & J_i={1}/({\omega^{c}_{i} \sigma_{i}}), \,  D_i={1}/{\sigma_i}, \qquad &{i\in\cn_\mgfm,} 
    \end{align}
the VSG dynamics in~\eqref{equ:gfm-vsg} are mathematically equivalent to the droop dynamics \cite{darco2014}. 
{As such, we define the state vector \(\bx_i\) and control input \(\bu_i\) of GFM \(i\) as
\begin{align}\label{equ:gfm-droop-xu}
    \bx_i \ceq  [\theta_i, \omega_i]^\top, \, \bu_i \ceq  [\omega^{s}_{i}, p^{s}_{i}]^\top, \qquad &i\in\cn_\mgfm. %
\end{align}}

\subsection{GFL Converter}\label{sec:gfl}
A GFL converter relies on the phase-locked loop (PLL) for frequency synchronization \cite{Yazdani2010}.
For GFL $i\in\cn_{GFL}$, two widely used PLL implementations---the three-phase and single-phase PLLs---are shown in Figs. \ref{fig:GFL_blocks}(a) and (b), respectively. 
As shown in Fig.~\ref{fig:GFL_blocks}, the main difference between the three-phase and single-phase PLLs lies in the calculation of $\delta_i$, which indicates the deviation between the PLL-estimated phase $\theta_{i}$ and the real phase of the terminal voltage $v^{abc}_i$. 

\begin{figure}[htbp]
    \centering
    \includegraphics[width=0.92\linewidth]{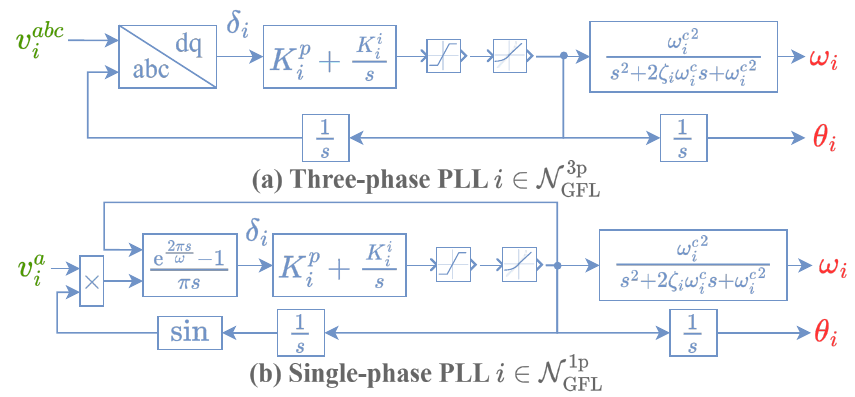}
    \caption{(a) Three-phase PLL~\cite{matlab-pll3p} and (b) single-phase PLL~\cite{golestan2019}}
    \label{fig:GFL_blocks}
\end{figure}
\subsubsection{Three-Phase PLL:}
The three-phase PLL obtains $\delta_i$ by performing a q-axis Park transformation 
$\mathbf{T}_q({\theta}_{i})$ to the three-phase terminal voltage \(\bm{v}^{abc}_{i}\ceq [v^{a}_{i};v^{b}_{i};v^{c}_{i}]\): 
\begin{subequations}\label{equ:3p-pll-vq}
\begin{flalign}
& \mathbf{T}_q({\theta}_{i})\ceq -{2}/{3}\left[
\sin({\theta}_{i}),\sin\left({\theta}_{i} - {2\pi}/{3}\right) ,  \sin\left({\theta}_{i} + {2\pi}/{3}\right)\right], \hspace*{-10em}&& \nonumber\\
& &&{i\in\cn_\mgfl^\mtp,}\label{equ:3p-pll-vq-T}\\
& \delta_{i} = \mathbf{T}_{q}({\theta}_i)\, \bm{v}^{abc}_{i}, \hspace*{-10em}&&{i\in\cn_\mgfl^\mtp,}\label{equ:3p-pll-vq-vq}
\end{flalign}
\end{subequations} 
\subsubsection{Single-Phase PLL:}
The single-phase PLL obtains $\delta_i$ by averaging the product of the phase-A voltage $v^{a}_{i}$ and the estimated quadrature component \(\sin(\theta_{i})\) over a period $T(\theta_i)$:
\begin{subequations}\label{equ:1p-pll-vq}
\begin{flalign}
    &T(\theta_i(t))\ceq {2\pi}/({\dot{\theta}_i(t)}), &&{i\in\cn_\mgfl^\msp,}\label{equ:1p-pll-vq-T}\\
    &\delta_{i} (t) \!= \!-\frac{2}{T(t)}\int_{t-T(t)}^{t} \!\!v^{a}_{i}(\tau) \sin(\theta_{i}(\tau)) \, \mathrm{d}\tau,  \hspace*{-10em}&&{i\in\cn_\mgfl^\msp.}\label{equ:1p-pll-vq-vq}
\end{flalign}
\end{subequations}
{Under balanced three-phase voltage conditions and when the PLL operates in the lock-in region, \eqref{equ:3p-pll-vq-vq} is equivalent to \eqref{equ:1p-pll-vq-vq}. The proof is given in Appendix~\ref{app:PLL-equiv}.}

After obtaining $\delta_i$, both the three-phase and single-phase PLLs feed \(\delta_{i}\) into a PI controller and a second-order low-pass filter to track the terminal frequency \({\omega}_{i}\) \cite{Yazdani2010, matlab-pll3p}:
\begin{subequations}\label{equ:3p-pll} 
    \begin{flalign}
        &\dot{\theta}_{i} \!=\! K^{p}_{i} \delta_{i} + K^{i}_{i} \textstyle\int \delta_{i}, &&{i\in\cn_\mgfl,\!\!} \label{equ:3p-pll-1}\\ 
        &\ddot{\omega}_{i} \!=\! -2\zeta_{i}\omega^{c}_{i} \dot{\omega}_{i} \!-\! {(\omega^{c}_{i})}^2 (\omega_{i} \!-\! K^{p}_{i} \delta_{i} \!-\! K^{i}_{i} \!\textstyle\int\! \delta_{i}), \hspace*{-10em}&&{i\in\cn_\mgfl,\!\!}\label{equ:3p-pll-2}
    \end{flalign}
\end{subequations}
where $K^{p}_{i}$ and $K^{i}_{i}$ denote the proportional and integral gains of the PI controller, respectively; and $\omega^{c}_{i}$ and $\zeta_{i}$ are the natural frequency and damping ratio of the second-order low-pass filter, respectively. 
Here, \(\int \delta_i\) is used as a shorthand for \(\int_{0}^{t} \delta_i(\tau) \,\mathrm{d}\tau\), for brevity.
{We define the state vector $\bx_{i}$ and control input $\bu_i$ of GFL $i$ as:
\begin{align}\label{equ:pll-xu}
    &\bx_i \ceq  [\theta_i,\; \omega_i,\; \dot{\omega}_i]^\top, \,
    \bu_i \in\emptyset, \qquad i\in\cn_\mgfl. 
\end{align} }

\subsection{Constraints on State and Control Variables}\label{sec:constraints}
According to IEEE Standard C37.118.1-2011~\cite{IEEE2011}, the terminal frequency {$\omega_i$} and control variables {$\omega^s_i$ and $p^s_i$} of converters are constrained within predefined safety limits, which are typically enforced by saturation blocks and rate limiters, as illustrated in Figs.~\ref{fig:GFM_blocks}-\ref{fig:GFL_blocks}. 
Accordingly, the state and control variables satisfy $\bx_i \in \mathcal{X}_i$ and $\bu_i \in \mathcal{U}_i$, where:
\begin{subequations}\label{equ:constraint}
    \begin{flalign}
    \cx_i &\ceq  \{ \bx_i \mid \underline{\omega}_i < \omega_i < \bar{\omega}_i, \underline{\dot{\omega}}_i < \dot{\omega}_i < \bar{\dot{\omega}}_i \}, 
    \hspace*{-1.2em}&& i\in\cn, \hspace*{-0.4em} \label{equ:constraint-x}\\
    \cu_i &\ceq 
        \{ \bu_i \mid \underline{\omega}^{s}_{i} < \omega^{s}_{i} < \bar{\omega}^{s}_{i},
    \underline{p}^{s}_{i} < p^{s}_{i} < \bar{p}^{s}_{i} \}, 
    \hspace*{-1.2em}&& i\in \cn_{\mgfm}, \hspace*{-0.4em}\label{equ:constraint-ugfm}\\
    \cu_i &\ceq \emptyset, 
    &&i\in \cn_{\mgfl}. \hspace*{-0.4em}\label{equ:constraint-ugfl}
    \end{flalign}
\end{subequations}
Here, $\underline{(\cdot)}$ and $\bar{(\cdot)}$ denote the lower and upper bounds of the corresponding variables, respectively.

\begin{remark}\label{remark:constraint}
The differential equations \eqref{equ:gfm-droop}, \eqref{equ:gfm-vsg}, and \eqref{equ:3p-pll} of GFM and GFL converters are valid only when state and control variables stay strictly within the interior of the constraint sets, i.e., $\bx_i \in \cx^{\circ}_i$ and $\bu_i \in \cu^{\circ}_i$ for all $i \in \cn$. 
When a state or control variable reaches the boundary of a constraint set, i.e., $\bx_i \in \partial \cx_i$ or $\bu_i \in \partial \cu_i$, it activates the saturation block or rate limiter. In such a case, the converter $i$ remains operational but no longer follows the dynamics described by \eqref{equ:gfm-droop}, \eqref{equ:gfm-vsg}, or \eqref{equ:3p-pll}. The state and control variables reaching the boundaries of constraint sets are referred to as \textit{outliers} and will be identified and excluded as explained in Section \ref{sec:method-regression}. 
\end{remark}

\section{Conventional SINDy}\label{sec:sindy}
Consider a nonlinear dynamical system 
\begin{equation}
     \dot{\bx}=\bm{f}(\bx, \bu), \label{equ:ODEs}
\end{equation}
where $\bx\in \cx\subseteq \mathbb{R}^{n_{\bx}}$ denotes the system state $\bx$ constrained to the operational set $\cx$ and $\bu\in \cu\subseteq \mathbb{R}^{n_{\bu}}$ denotes the input vector $\bu$ constrained to the admissible set $\cu$. 
As observed in \cite{brunton2016a}, most physical systems have only a few dominant terms in their governing equations, rendering these equations sparse in a high-dimensional space of nonlinear functions. 
In light of this, the nonlinear system \eqref{equ:ODEs} can be reformulated as 
\begin{equation}\label{equ:sindy-vector}
    \dot{\bx} = \Xi^\top \btheta(\bx, \bu),
\end{equation}
where \(\btheta(\bx,\bu):\cx\times\cu\to\mathbb{R}^{n_{\btheta}}\) is the library of candidate functions, and \(\Xi\in\mathbb{R}^{n_{\btheta}\times n_{\bx}}\) is a column-sparse coefficient matrix. 
The objective of SINDy is to design an appropriate library \( \btheta \) such that we could transform the original nonlinear system \eqref{equ:ODEs} into a linear one \eqref{equ:sindy-vector}, thereby greatly simplifying both system identification and controller design. 

In power systems, a common approach to construct a library is to {intuitively} include constant, polynomial, and trigonometric functions into it \cite{gong2023a, cai2023, brunton2016a, nandakumar2023,cai2025}: 
\begin{equation}\label{equ:library-intuitive}
\btheta^{{\mint}}(\bx,\bu)\ceq \begin{bmatrix}
1; \bm{P}_k(\bx,\bu); \sin(\bx,\bu); \cos(\bx,\bu); \cdots\\
\end{bmatrix}, 
\end{equation}
where $\bm{P}_k(\bx,\bu)$ denotes the vector of all monomials in $\bx$ and $\bu$ with total degree up to $k$~\cite{yi2022}.
{However, $\btheta^\mint$ may not capture the true  GFM or GFL frequency dynamics in \eqref{equ:gfm-droop}, \eqref{equ:gfm-vsg}, and \eqref{equ:3p-pll}.
}

{Alternatively, ML methods can generate libraries with implicitly represented functions, greatly extending applicability to a broader class of systems \cite{xu2023, yeung2017}. 
However, the implicit representations lack physical interpretability, limiting controller design with stability guarantees.}

{In summary, both approaches struggle to identify the \emph{true} dynamical model, and may therefore lose generalization ability to unseen disturbances.}
In the next section, we will employ the domain knowledge of DER dynamics to construct a physically consistent library $\btheta$ that can identify the \emph{true} frequency dynamics model of a DER-based MG from a finite training dataset.

\section{Physically Consistent Library of Candidate Functions} \label{sec:construct-library}

We reconstruct the frequency dynamics of GFM \eqref{equ:gfm-droop}--\eqref{equ:gfm-vsg} and GFL \eqref{equ:3p-pll} into the SINDy form
\begin{equation}\label{equ:sindy-i}
    \dot{\bx}_i = \Xi_i^\top \btheta_i(\bx_i, \bu_i),
\end{equation}
and develop the library $\btheta_i$ and coefficient matrix $\Xi_i$ accordingly.
Taking the GFM droop dynamics \eqref{equ:gfm-droop} as an example, we reconstruct it as
\begin{subequations}
\begin{flalign}
    \eqref{equ:gfm-droop} & \Leftrightarrow 
    \begin{bmatrix}\dot{\theta}_i\\ \dot{\omega}_i \end{bmatrix}
    =
    \begin{bmatrix}\omega_{i}\\ \omega^{c}_{i} (\omega^{s}_{i}-\omega_{i}) + \omega^{c}_{i} \sigma_{i} (p^{s}_{i}-p_{i})  \end{bmatrix}, \hspace*{-3cm}&&\hspace*{0.78cm}i\in\cn_\mgfm^\mdroop, \hspace*{-0.78cm} \notag \\
    & \Leftrightarrow 
    \underbrace{\begin{bmatrix}\dot{\theta}_i\\ \dot{\omega}_i \end{bmatrix}}_{\bdx_i}
    =
    {\underbrace{\begin{bmatrix}
    1 & -\omega^{c}_{i} \\
    0 & \omega^{c}_{i} \\
    0 & -\omega^{c}_{i} \sigma_{i} \\
    0 & \omega^{c}_{i} \sigma_{i}
\end{bmatrix}}_{\ceq \Xi_i} }^\top
    \underbrace{\begin{bmatrix}\omega_{i}\\ \omega^{s}_{i} \\ p_{i}\\ p^{s}_{i}\end{bmatrix}}_{\ceq \btheta_i}, &&i\in\cn_\mgfm^\mdroop , \label{equ:gfm-droop-define}\\
    & \Leftrightarrow \bdx_i=\Xi^\top_i \btheta_i &&i\in\cn_\mgfm^\mdroop , \label{equ:gfm-droop-sindy}
\end{flalign}
\end{subequations}
where \eqref{equ:gfm-droop-define} naturally defines $\btheta_i$ and $\Xi_i$ 
such that \eqref{equ:gfm-droop} takes the SINDy form \eqref{equ:sindy-i}.

Similarly, we develop $\btheta_i$ and $\Xi_i$ for GFM VSG~\eqref{equ:gfm-vsg} and GFL PLL~\eqref{equ:3p-pll}, and summarize all cases as:

\begin{subequations}\label{equ:theta-xi-all}
    \begin{flalign}
    &\btheta_i(\bx_{i}, \bu_{i})\ceq  
    \begin{cases}
        \begin{bmatrix}\omega_{i}& \omega^{s}_{i} & p_{i}& p^{s}_{i}\end{bmatrix}^{\!\top}, & i\in\cn_\mgfm,\\
        \begin{bmatrix} \omega_{i}& \dot{\omega}_{i}& \delta_{i}& \textstyle\int \delta_{i}\end{bmatrix}^\top, & i\in\cn_\mgfl,
    \end{cases} \hspace*{-10em}&& \label{equ:btheta_i}\\
    &\Xi_i\!\ceq \!
    \begin{cases}
        \!\!\begin{bmatrix}
        1 & 0 & 0 & 0 \\
        -\omega^{c}_{i} & \omega^{c}_{i} & -\omega^{c}_{i} \sigma_{i} & \omega^{c}_{i} \sigma_{i}
        \end{bmatrix}^{\!\top}\!\!\!\!,  &i\in\cn_\mgfm^\mdroop,
        \\
        \!\!\begin{bmatrix} 
        1 & 0 & 0 & 0 \\ 
        -\frac{D_{i}}{J_{i}} & \frac{D_{i}}{J_{i}} & -\frac{1}{J_{i}} & \frac{1}{J_{i}} 
        \end{bmatrix}^{\!\top}\!\!\!\!, &i\in\cn_\mgfm^\mvsg,
        \\
        \!\!\begin{bmatrix}
            0 & 0 & K^{p}_{i} & K^{i}_{i} \\
            0 & 1 & 0 & 0 \\
            -{(\omega^{c}_{i})}^2 \!\!&\!\! -2\zeta_{i}\omega^{c}_{i} \!\!&\!\! {(\omega^{c}_{i})}^2 K^{p}_{i} \!\!&\!\! {(\omega^{c}_{i})}^2 K^{i}_{i}
        \end{bmatrix}^{\!\!\!\top}\!\!\!, \!\!\! &i\in\cn_\mgfl. 
    \end{cases} &&  \label{equ:xi_i}
    \end{flalign}
\end{subequations}
Stacking the variables for the entire MG yields:
\begin{subequations}\label{equ:whole-defs}
\begin{align}
   &\btheta(\bx,\! \bu) \!\!\cceq\! (\btheta_i(\bx_i,\! \bu_i))_{i\in\cn}, \hspace*{-0.3em}
   & \Xi &\!\cceq\! \diag((\Xi_i)_{i\in\cn}), \label{equ:stack-btheta}\\
   &\bx \!\cceq\! (\bx_i)_{i\in\cn}, 
   & \bu &\!\cceq\! (\bu_i)_{i\in\cn}, \label{equ:bx-bu}\\
   &\cx \!\cceq\!\textstyle \prod_{i\in\cn}\cx_i, 
   & \cu &\!\cceq\! \textstyle\prod_{i\in\cn}\cu_i,
\end{align}
\end{subequations}
where $\prod_{i\in\cn}$ denotes the Cartesian product of sets.

It should  be noted that the parameters in $\Xi$ (e.g., the cut-off frequency $\omega^c_i$ for GFM droop control $i\in\cn^\mdroop_\mgfm$ and GFL PLL $i\in\cn_\mgfl$ and damping coefficient $D_i$ for GFM VSG control $i\in\cn^\mvsg_\mgfm$) 
are typically unknown to utility operators, while the measurements needed to construct library $\btheta$ (e.g., phase angle $\theta_i$ and frequency $\omega_i$ for DER $i\in\cn$) are available for operators via PMUs. Our goal is to use the available PMU measurements to identify the unknown coefficient matrix $\Xi$ that describes the frequency dynamics of the DER-based MG.

This domain-informed formulation allows us to build the libraries \eqref{equ:btheta_i} directly from first principles %
rather than relying on heuristic choices such as 
the generic polynomial–-trigonometric library {$\btheta^\mint$ \eqref{equ:library-intuitive} or ML-based implicit representations. As a result, when combined with advanced regression techniques, the identified coefficient matrices $\hat{\Xi}_i$ and the proposed libraries $\btheta_i$ \eqref{equ:btheta_i} recover the \emph{true} underlying physical model, generalizing to large-signal transients unseen during training and enabling accurate prediction and control beyond the capabilities of $\btheta^\mint$ or ML-based methods.}

\begin{remark}\label{remark:constraint-identification}
 As noted in Remark \ref{remark:constraint}, \eqref{equ:gfm-droop}, \eqref{equ:gfm-vsg}, and \eqref{equ:3p-pll} accurately describe the dynamics of GFM and GFL converters when \eqref{equ:constraint} holds, i.e., when constraint blocks in Figs. \ref{fig:GFM_blocks}(b)-(c) and Fig. \ref{fig:GFL_blocks} are inactivate. Under this condition, the SINDy-form model \eqref{equ:sindy-i} {with $\btheta_i$ and $\Xi_i$ defined }in \eqref{equ:theta-xi-all} is equivalent to \eqref{equ:gfm-droop}, \eqref{equ:gfm-vsg}, and \eqref{equ:3p-pll}, representing the true physical model. 

\end{remark}

\begin{remark}\label{remark:other-blocks}

The dynamics of control blocks other than frequency blocks in Figs. \ref{fig:GFM_blocks}(b)–(c) and \ref{fig:GFL_blocks} (e.g., current and voltage control blocks) are naturally captured by the library $\btheta_i$. Specifically, for GFM $i\in\cn_\mgfm$, these dynamics are mainly reflected in $p_i$; for GFL $i\in\cn_\mgfl$, they are mainly reflected in $v^{abc}_i$ and, therefore, in $\delta_i$. 
Likewise, changes in network topology, triggers of protections, or the activation of constraints in non-frequency blocks are all reflected in the constructed library $\btheta$. 
In sum, when \eqref{equ:constraint} is satisfied, the SINDy-form model \eqref{equ:sindy-i} describes the true physical model despite external limiter activation and changes in operating conditions or network topology.  When \eqref{equ:constraint} is violated, the measurements are termed as \textit{outliers} and will be identified and excluded as explained in Section \ref{sec:method-regression}.    

\end{remark}

In the next section, we explain how the library \(\btheta\) and $\dot{\bm{x}}$ are constructed from PMU measurements and how $\Xi$ is accurately estimated despite measurement noise and constraint activation.  

\section{PC-SINDYc Framework for System Identification and Frequency Control} \label{sec:methodology}

PC-SINDYc first estimates the coefficient matrix $\Xi$ offline using PMU measurements collected under small perturbations. We establish a probabilistic convergence guarantee for this identification process in the presence of measurement noise and constraint activation. Using the offline-identified model $\dot{\bx} = \Xi^\top \btheta(\bx, \bu)$, we then design an online MPC for real-time frequency control and show that, under mild conditions, the resulting closed-loop MG is asymptotically stable.

\subsection{Small Perturbations}\label{sec:method-activation}
To excite the MG frequency dynamics, we inject small and continuous disturbances through the control inputs $\bu$ \eqref{equ:bx-bu}. 

Specifically, we design the disturbance function $\psi(t)$ as a sum of sinusoids with randomized amplitudes, frequencies, and phases:
\begin{flalign}\label{equ:psi}
    &\psi(t; A^\psi, N^\psi, \bar{\omega}^{\psi})
    \ceq \frac{A^\psi}{N^\psi}\sum_{j=1}^{N^\psi} A^\psi_{j}
    \sin(\omega^\psi_{j} t + \delta^\psi_{j}), \hspace*{-20cm}&&\notag\\
    & &&\hspace*{-2cm}A^\psi_{j}\! \sim \!\operatorname{Unif}(-1,1), 
    \omega^\psi_{j}\! \sim \!\operatorname{Unif}(0,\bar{\omega}^{\psi}) ,
    \delta^\psi_{j}\! \sim \!\operatorname{Unif}(0,2\pi). 
\end{flalign}
Here, $A^\psi$ is the desired magnitude of disturbance, $N^\psi$ is the number of sinusoids to be summed, $\bar{\omega}^{\psi}$ is the disturbance bandwidth, and $\operatorname{Unif}(\cdot,\cdot)$ denotes the uniform distribution 

In our implementation, during the identification period $t \in \ct_\mtrn \ceq [0, T_\mtrn]$, we perturb the control input $\bu=\left([\omega^s_i;p^s_i]\right)_{i\in\cn_{\mgfm}}$ with small-signal disturbances
\begin{subequations}\label{equ:excitation}
    \begin{flalign}
        \omega^s_{i,t} &= 1 + \psi(t; 0.001, 50, 50\pi), \hspace*{-4cm} &&i\in\cn_{\mgfm}, t\in\ct_\mtrn,\label{equ:excitation-omega}\\
        p^s_{i,t} &= 0.5 + \psi(t; 0.01, 50, 50\pi), \hspace*{-4cm} &&i\in\cn_{\mgfm}, t\in\ct_\mtrn,\label{equ:excitation-p}
    \end{flalign}
\end{subequations}
in per unit value, so that the excitations remain small and introduce only negligible perturbations around the nominal operating point, and do not activate the constraints \eqref{equ:constraint} for either GFM or GFL converters.

\subsection{Using PMU Measurements to Construct the Library of Candidate Functions ${\vartheta}$}\label{sec:PMU}

With advances in PMU technology, PMU data have been widely used to estimate static and dynamic power-system models. For instance,  using PMU measurements, prior work has identified equivalent static power-flow models \cite{yu2018}, pseudo-dynamic network models \cite{li2018}, dynamic power-flow models \cite{wang2020,pierrou2021,guo2021}, %
and dynamic load models \cite{ge2015,du2021} on sub-second to second timescales.

To construct $\btheta$ and state derivatives $\bdx$, we exploit widely deployed PMU measurements. 
According to IEEE Standard C37.118.1-2011~\cite{IEEE2011}, a PMU at bus $i\in\cn$ is a device that measures three-phase voltage $\bm{v}^{abc}_{i,t}$ and current $\bm{i}^{abc}_{i,t}$ over the sampling time set $t\in\ct^{\ms} \ceq  \left\{ t_k = k \Delta T^{\ms} \mid k \in \mathbb{N} \right\}$ and reports time-synchronized estimates of voltage and current phasors, $V_{i,t}\angle\theta_{i,t}$ and $I_{i,t}\angle\varphi_{i,t}$, terminal frequency $\omega_{i,t}$, and the rate of change of frequency (RoCoF) $\dot{\omega}_{i,t}$ over the reporting time set $t\in\ct^{\mr} \ceq  \left\{ t_k = k \Delta T^{\mr} \mid k \in \mathbb{N} \right\}$. 
{Typically, the reporting time $\Delta T^\mr$ is an integer multiple of the sampling time $\Delta T^\ms$, i.e., $\ct^\mr \subset \ct^\ms$. }

We aim to construct the data stream $\{\btheta(\bx_t, \bu_t), \bdx_t\}_{t \in \ct^\mr}$ over the reporting time set $\ct^\mr$. 
From standard PMU outputs and known control inputs $\bu$, we directly obtain
$\{\omega_{i,t}, \omega^s_{i,t}, p^s_{i,t}, \dot{\omega}_{i,t} \}_{i \in \cn_{\mgfm}, t \in \ct^\mr}, \,
\{\omega_{i,t}, \dot{\omega}_{i,t} \}_{i \in \cn_{\mgfl}, t \in \ct^\mr}$.  
The remaining signals required include  
$\{p_{i,t}, \dot{\theta}_{i,t} \}_{i \in \cn_{\mgfm}, t \in \ct^\mr}, \, 
\{\delta_{i,t}, \textstyle\int \delta_{i,t}, \dot{\theta}_{i,t}, \ddot{\omega}_{i,t} \}_{i \in \cn_{\mgfl}, t \in \ct^\mr}$.  
Many commercial PMUs provide customizable input and output interfaces, allowing users to access additional internal signals~\cite{SEL-2240}.  
Leveraging this capability, we utilize PMUs' internal high-frequency data over the sampling set $\ct^\ms$ to compute the remaining signals listed above: 
\begin{subequations}\label{equ:num-cal-library}
    \begin{flalign}
        & p_{i,t}=3V_{i,t}I_{i,t}\cos(\theta_{i,t}-\varphi_{i,t}), &&i\in\cn_\mgfm, t\in\ct^\ms,  \label{equ:lib-p}\\
        & \delta_{i,t} = \mathbf{T}_{q}({\theta}_{i,t})\cdot \bm{v}^{abc}_{i,t}, &&i\in\cn_\mgfl, t\in\ct^\ms,  \label{equ:lib-delta}\\
        &{\textstyle \int} \delta_{i,t} \approx \Delta T^\ms\sum_{\tau\in\ct^\ms, \tau\le t}\delta_{i,\tau}, &&i\in\cn_\mgfl, t\in\ct^\ms, \label{equ:lib-delta-int}\\
        &\dot{\theta}_{i,t} \! \approx\! ({\theta}_{i,t+\Delta T^\ms}\!-\!{\theta}_{i,t})/\!{\Delta T^\ms}, 
    &&i\in\cn, t\in\ct^\ms, \\
    &\ddot{\omega}_{i,t}\! \approx\! (\dot{\omega}_{i,t+\Delta T^\ms}\!-\!\dot{\omega}_{i,t})/\!{\Delta T^\ms},  
    && i\in\cn_{\mgfl}, t\in\ct^\ms.
    \end{flalign}
\end{subequations}
We then downsample these computed signals to the reporting set $\ct^\mr$ to align with the PMU reporting rate.

With the data stream $\{\btheta(\bx_t, \bu_t), \bdx_t\}_{t \in \ct^\mr}$ provided by PMUs, we construct the data matrices over $\ct^\mr_\mtrn \ceq \ct^\mr \cap \ct_\mtrn$ as follows: \begin{subequations}\label{equ:bTheta-bdX}
    \begin{flalign}
            &\bTheta\ceq \Big[\btheta(\bx_{t_1}, \bu_{t_1}), \cdots, \hspace*{-10cm}&& \notag \\
            & &&\btheta(\bx_{t_{|\ct^\mr_\mtrn|}},  \bu_{t_{|\ct^\mr_\mtrn|}})\Big]^{\top} \in \mathbb{R}^{{|\ct^\mr_\mtrn|}\times n_{\btheta}}, \\
        &\bdX\ceq \left[\bdx_{t_1}, \bdx_{t_2}, \cdots, \bdx_{t_{|\ct^\mr_\mtrn|}}\right]^\top \in \mathbb{R}^{{|\ct^\mr_\mtrn|}\times n_{\bx}} . \hspace*{-10cm}&&
    \end{flalign}
\end{subequations}
Accordingly, the dynamical model in vector form \eqref{equ:sindy-vector} can be reformulated in matrix form as: 
\begin{equation} \label{equ:sindy-matrix}
    \dot{\bm{X}}=\bTheta\cdot \Xi .
\end{equation}

\subsection{{The regression algorithm PC-SINDy}}\label{sec:method-regression}
Our goal is to estimate the coefficient matrix $\Xi$ (with true converter parameters) from $(\bTheta,\bdX)$. 
The original SINDy algorithm~\cite{brunton2016a} identifies the governing equations of nonlinear dynamical systems by promoting sparsity, but it is sensitive to noise in state derivatives. 
In addition, when constraints on $\omega_i$, $\omega^s_i$, and $p^s_i$ are activated, the system dynamics no longer follow \eqref{equ:sindy-matrix} as discussed in Remark~\ref{remark:constraint}. 
To address these challenges, we enhance the original SINDy algorithm as follows:  
\begin{enumerate}
    \item We replace the ordinary least squares (OLS) regression in the original SINDy~\cite{brunton2016a} with TLS regression~\cite{vanHuffel1991}, which accounts for noise in both states and derivatives. This mitigates the regression dilution effect and improves the accuracy of the estimated coefficient matrix.  
    \item We integrate RANSAC~\cite{fischler1981} to automatically reject data samples affected by constraint activation.
\end{enumerate}
Our enhanced algorithm PC-SINDy is noise-robust and constraint-adaptive; its pseudocode is given in Algorithm~\ref{alg:ca-sindy}.
\begin{algorithm}[htbp]
\caption{{Noise-robust and constraint-adaptive PC-SINDy, $\hat{\Xi}=\operatorname{PC-SINDy}(\bTheta,\bdX)$}}\label{alg:ca-sindy}
\begin{algorithmic}[1]
\Inputs $(\bTheta, \bdX)$ \color{black} under small perturbations \eqref{equ:excitation}\color{black}, maximum iterations $K_{\mathrm{RANSAC}}$, minimal subset size $m$, inlier threshold $\tau$
\State $q^\star \gets -\infty$;\quad $\mathcal{I}^\star \gets \varnothing$;\quad $\hat{\Xi}^\star \gets \mathbf{0}_{n_{\btheta}\times n_{\bx}}$
\For{$k = 1$ to $K_{\mathrm{RANSAC}}$}
  \State Randomly draw a subset \(\mathcal{J} \subseteq \ct^\mr_\mtrn\) with \(|\mathcal{J}| = m\)
  \StateH{$\hat{\Xi} \gets \operatorname{RSINDy}\!\big(\Theta[\mathcal{J},:],\, \dot{\bX}[\mathcal{J},:]\big)$ \Comment{\footnotesize Algorithm~\ref{alg:sindy}}}
  \State $\br \gets \left(\left\|\bdX[t,:] - \bTheta[t,:]\hat{\Xi}\right\|_2\,\right)_{t \in \ct^\mr_\mtrn}$
  \State $\mathcal{I} \gets \{t \in \ct^\mr_\mtrn \mid \br[t]\le \tau \}; \quad q \gets |\mathcal{I}|$ 
  \If{${q} > q^\star$}
     \State $q^\star \gets q$;\quad $\mathcal{I}^\star \gets \mathcal{I}$
  \EndIf
\EndFor
\StateH{$\hat{\Xi} \gets \operatorname{RSINDy}(\Theta[\mathcal{I}^\star,:],\, \dot{\bX}[\mathcal{I}^\star,:])$ \Comment{\footnotesize Algorithm~\ref{alg:sindy}}}
\Output $\hat{\Xi}$
\end{algorithmic}
\end{algorithm}
\begin{algorithm}
\caption{Noise-robust RSINDy, $\hat{\Xi}=\operatorname{RSINDy }(\Theta,\dot{\bX})$ }
\label{alg:sindy}
\begin{algorithmic}[1]
\Inputs $(\bTheta, \bdX)$, iterations $K_{\mathrm{SINDy}}$, sparsity threshold $\lambda$
\State $\hat{\Xi} \gets \mathbf{0}_{n_{\btheta}\times n_{\bx}}$  
\For{$i = 1$ to $n_{\bx}$}
    \State $\mathcal{S}_0 \gets \bbn_{1:n_{\btheta}}, \, \mathcal{S} \gets \mathcal{S}_0$
    \For{$k = 1$ to $K_{\mathrm{SINDy}}$}
        \StateH{$\hat{\Xi}[\mathcal{S},\, i] \gets \operatorname{TLS}\big( \bTheta[:,\,\mathcal{S}],\ \bdX[:,\,i] \big)$  \Comment{TLS~\cite{vanHuffel1991}} \label{state:ols}} 
        \State $\mathcal{S} \gets \{s \in \mathcal{S} \mid |\hat{\Xi}[s,i]| \ge \lambda\}$
        \If{$\mathcal{S} $ converged} 
            \State \textbf{break}
        \EndIf
        \State $\hat{\Xi}[\mathcal{S}_0\setminus \mathcal{S},i]\gets \bm{0}$
    \EndFor
\EndFor
\Output $\hat{\Xi}$
\end{algorithmic}
\end{algorithm}

{The outer loop of PC-SINDy (Algorithm~\ref{alg:ca-sindy}) aims to use RANSAC to identify a subset of $(\bTheta, \bdX)$ that produces the most inliers.}
As shown in Algorithm~\ref{alg:ca-sindy}, RANSAC repeatedly: (i) draws a minimal subset of time indices $\mathcal{J}\subseteq \ct^\mr_\mtrn$ with $|\mathcal{J}|=m$, (ii) calls the middle loop to estimate a provisional $\hat{\Xi}$ on $(\bTheta[\mathcal{J},:],\bdX[\mathcal{J},:])$, (iii) evaluates residuals $\br = \left(\|\bdX[t,:]-\bTheta[t,:]\hat{\Xi}\|_2\,\right)_{t\in\ct^\mr_\mtrn}$,
and (iv) forms the inlier set $\mathcal{I}=\{t\in\ct^\mr_\mtrn\mid r_t\le \tau\}$ with score $q=|\mathcal{I}|$. The iteration that yields the largest inlier set $\mathcal{I}^\star$ is retained, and the final model $\hat{\Xi}$ is obtained by fitting on $\mathcal{I}^\star$. 
In our implementation, the minimal subset size \(m\) is chosen as \(m = 3n_{\btheta}\) to improve numerical stability while maintaining computational efficiency. 

The inner loop $\hat{\Xi} = \operatorname{RSINDy}(\cdot, \cdot)$ adopts the original SINDy algorithm (Code~1 in~\cite{brunton2016a}) with the OLS step in line~\ref{state:ols} replaced by TLS~\cite{vanHuffel1991}, as detailed in Algorithm~\ref{alg:sindy}. For each state $i \in \{1, \dots, n_{\bx}\}$, the algorithm iteratively: (i) performs regression over the active set $\mathcal{S}$, (ii) prunes $\mathcal{S}$ by thresholding $|\hat{\Xi}[s,i]| < \lambda$, and (iii) sets $\hat{\Xi}[s,i] = 0$ for all $s \notin \mathcal{S}$. This loop repeats until convergence or reaching the maximum iteration count $K_{\mathrm{SINDy}}$.

The proposed identification algorithm $\hat{\Xi}= \operatorname{PC-SINDy}(\bTheta,\bdX)$ provides a probabilistic guarantee of convergence as presented in the following proposition. 
{\begin{proposition}\label{proposition:1}
Let $w$ denote the proportion of inliers in the data, $m$ the minimal subset size, and $p$ the desired probability of convergence. Then, in the worst case, the minimal number of iterations required to ensure Algorithm~\ref{alg:sindy} converges to a local minimizer with probability at least $p$ is given by 
    \begin{equation}\label{equ:convergence}
        K\ceq n_{\bx} n_{\btheta} \left\lceil {\log(1-p)}/{\log(1-w^m)} \right\rceil, 
    \end{equation}
    where $\lceil \cdot \rceil$ denotes the ceiling function. 
\end{proposition}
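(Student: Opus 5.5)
The bound \eqref{equ:convergence} is the product of two factors, so I will argue it in two parts. The first factor, $n_{\bx} n_{\btheta}$, bounds the inner work of Algorithm~\ref{alg:sindy}. The second, the ceiling term, is the classical RANSAC sample-count bound of \cite{fischler1981} for the outer loop of Algorithm~\ref{alg:ca-sindy}. A run is declared successful if at least one RANSAC draw consists entirely of inliers and the inner RSINDy call on that draw terminates at a fixed point of the sequential thresholded TLS iteration. That fixed point is a local minimizer of the thresholded TLS objective restricted to the active support. I will bound the probability of success from below in the worst case and then solve for the iteration count.

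\textbf{Step 1: RANSAC factor.} Each draw $\mathcal{J}$ picks $m$ indices from $\ct^\mr_\mtrn$. In the worst case I treat the draws as independent samples in which each index is an inlier with probability $w$, i.e.\ sampling with replacement or a large data set. Then a single subset is outlier-free with probability $w^m$. All $N$ draws contain an outlier with probability $(1-w^m)^N$. Requiring $1-(1-w^m)^N \ge p$ and taking logarithms gives
\[
N \ge \log(1-p)/\log(1-w^m),
\]
where the inequality flips because $\log(1-w^m)<0$. The smallest integer that works is the ceiling. Sampling without replacement only increases the chance of an all-inlier draw, so the with-replacement value is indeed a worst case.

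\textbf{Step 2: inner SINDy factor.} I need to show that on an all-inlier subset, RSINDy stops after at most $n_{\bx} n_{\btheta}$ TLS solves. For each of the $n_{\bx}$ columns, the active set $\mathcal{S}$ is monotonically non-increasing, since line 6 only removes indices. So $\mathcal{S}$ either stays the same, which is convergence and a break, or loses at least one element. That allows at most $n_{\btheta}$ updates per column. At a stationary $\mathcal{S}$, the TLS solution on $\bTheta[:,\mathcal{S}]$ is a minimizer of the TLS objective, namely the right singular vector for the smallest singular value, over the support $\mathcal{S}$. This is the local-minimizer property, following the argument for sequentially thresholded least squares adapted to TLS. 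Together this gives at most $n_{\bx}n_{\btheta}$ regressions per RANSAC draw. Multiplying by the count from Step 1 yields the total $K$ in \eqref{equ:convergence}.

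\textbf{Main obstacle.} I expect the difficulty to lie in justifying that the inner fixed point really is a local minimizer under TLS rather than OLS, and in making precise what ``worst case'' means. For the first point, I would use the fact that TLS minimizes the orthogonal residual uniquely when the smallest singular value of $[\bTheta[:,\mathcal{S}],\bdX[:,i]]$ is simple. Any small perturbation of the nonzero coefficients that keeps the support fixed therefore cannot decrease the objective, and thresholding prevents new small entries from entering. For the second point, the worst case means that every column requires all $n_{\btheta}$ pruning steps and that no early all-inlier draw occurs before the $N$-th. With both conventions stated explicitly, the two factors combine multiplicatively.
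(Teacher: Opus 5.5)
Your decomposition matches the paper's proof. The ceiling term is the Fischler--Bolles draw count for the outer loop of Algorithm~\ref{alg:ca-sindy}. Each draw costs at most $n_{\btheta}$ TLS solves for each of the $n_{\bx}$ columns, and the two factors multiply. The paper only cites \cite{fischler1981} and \cite{brunton2016a} for these factors. Your monotone-active-set argument is actually a cleaner justification of the inner factor: it does not care whether the regression step is OLS or TLS, whereas the cited SINDy result concerns the OLS iteration.

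However, one step in your Step~1 is wrong as written. You claim that sampling without replacement ``only increases the chance of an all-inlier draw,'' so that $w^m$ is a worst case. The inequality goes the other way. Suppose $M\ceq|\ct^\mr_\mtrn|$ samples contain $wM$ inliers. A draw of $m$ distinct indices, which is what Algorithm~\ref{alg:ca-sindy} performs, is outlier-free with probability $\prod_{j=0}^{m-1}\frac{wM-j}{M-j}$. Every factor satisfies $\frac{wM-j}{M-j}-w=-\frac{(1-w)j}{M-j}\le 0$. Hence the per-draw success probability is \emph{at most} $w^m$, and the ceiling term can \emph{underestimate} the number of draws needed.

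So the with-replacement count is not a worst case. It is exact only under independent sampling and is an approximation when $M\gg m$. That is numerically harmless here ($M\approx 1200$ over $10$~s at $120$~Hz), but it is not a guarantee. To repair the argument, you have two options:
\begin{itemize}
\item State independence of the indices within a draw as an explicit modeling assumption, which is what Fischler and Bolles, and implicitly the paper, do.
\item For a genuine worst-case bound, replace $w^m$ by the hypergeometric probability $\binom{wM}{m}/\binom{M}{m}$, which enlarges the ceiling term.
\end{itemize}

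A smaller point on Step~2: the local-minimizer claim should be stated for the $\ell_0$-penalized TLS objective, i.e.\ fit plus $\lambda^2\|\cdot\|_0$. It holds because every nearby point has one of two kinds of support:
\begin{itemize}
\item the same support, where the fixed point already minimizes the continuous TLS fit; or
\item a strictly larger support, which pays a fixed penalty that small changes in the fit cannot offset.
\end{itemize}
It is not because thresholding ``prevents new small entries from entering.'' Also, uniqueness of the TLS solution needs the last component of the smallest right singular vector to be nonzero, not only a simple smallest singular value.
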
}
{\begin{proof}
The ensemble algorithm $\hat{\Xi}= \operatorname{PC-SINDy}(\bTheta,\bdX)$ consists of an outer RANSAC loop and an inner RSINDy loop. For the outer loop, the minimal number of RANSAC iterations required to ensure convergence with probability at least $p$ is $K_{\mathrm{RANSAC}} \ceq \left\lceil {\log(1-p)}/{\log(1-w^m)} \right\rceil$ (see Sec.~II\text{-}B of \cite{fischler1981}). 
The inner RSINDy loop guarantees convergence {to a local minimizer} within $K_{\mathrm{SINDy}} \ceq n_{\btheta}$ iterations for each state $i \in \bbn_{1:n_{\bx}}$~\cite{brunton2016a}. 
Therefore, in the worst case, the total number of TLS regressions is
\begin{equation}\label{equ:total-iteration-number}
    K \ceq n_{\bx} K_{\mathrm{RANSAC}} K_{\mathrm{SINDy}},
\end{equation}
and the algorithm returns a consistent estimate $\hat{\Xi}$ with probability at least $p$.
Hence, the iteration bound~\eqref{equ:convergence} holds.
\end{proof}}

Probabilistic convergence in Proposition~\ref{proposition:1} refers to \emph{inlier-consistent identification}: PC-SINDy (Algorithm~\ref{alg:ca-sindy}) is guaranteed, with a prescribed probability, to find a coefficient matrix $\hat{\Xi}$ consistent with the inlier data.  %
Proposition~\ref{proposition:1} therefore characterizes the computational efficiency of the proposed PC-SINDy.

For an MG with $N_{\mathrm{DER}}$ DERs, we have the number of state variables $n_{\bx}\leq 3N_{\mathrm{DER}}$, the dimension of the library $n_{\btheta} = 4N_{\mathrm{DER}}$, and the minimal subset size $m = 3n_{\btheta} = 12N_{\mathrm{DER}}$.
Under the small perturbations \eqref{equ:excitation} and the constraints\footnote{The frequency limits for GFM and GFL converters are [57~Hz, 61.8~Hz] and [48~Hz, 66~Hz], respectively, with corresponding ROCOF limits of at least $\pm 5$~Hz/s and $\pm 10$~Hz/s.} specified in IEEE Std~C37.118.1-2011~\cite{IEEE2011}, the inlier proportion $w$ exceeds {$99.9\%$}, and we simply set {$w = 0.999$}. 
Even with a high success probability $p = 99.9\%$, the worst-case iteration count satisfies
\begin{flalign}
    K &\leq 12N_{\mathrm{DER}}^2 \underbrace{\left\lceil {\log(1-p)}/{\log(1-w^{12N_{\mathrm{DER}}})} \right\rceil}_{\approx N_{\mathrm{DER}}/4, \; \forall N_{\mathrm{DER}}\leq 20} \nonumber \hspace*{-10em}\\
      &\leq 12N_{\mathrm{DER}}^2 \times {N_{\mathrm{DER}}}/{3} \nonumber \\
      &\leq 4N_{\mathrm{DER}}^3, && \forall N_{\mathrm{DER}}\leq 20,\label{equ:remark-cubical}
\end{flalign}
i.e., the worst-case iteration count scales as $\mathcal{O}(N_{\mathrm{DER}}^3)$. Although the complexity grows cubically, each iteration ($\hat{\Xi}[\mathcal{S},\, i] \gets \operatorname{TLS}\big( \bTheta[:,\,\mathcal{S}],\ \bdX[:,\,i] \big)$, line \ref{state:ols} of Algorithm \ref{alg:sindy}) is computationally inexpensive. Consequently, even for a large MG with $N_{\mathrm{DER}} = 20$ DERs, the computation time under the worst-case iteration counts remains below $2$~s, as measured in \textsc{MATLAB}~R2024a on an Intel Core~i7-7700 CPU at 3.60~GHz. %

For the 4-DER MG considered in Section~\ref{sec:simulation}, which includes one GFM and three GFLs, we have $n_{\bx}=11$, $K_{\mathrm{SINDy}} = n_{\btheta} = 16$, and $m = 3n_{\btheta} = 36$. In this case, PC-SINDY requires at most $K={528}$ total iterations to converge in the worst case according to \eqref{equ:total-iteration-number}. In practice, however, it converges in only $126$ iterations and takes $8.6$~ms under the aforementioned computational environment, which demonstrates its high computational efficiency.

\subsection{Frequency {Control} with PC-SINDYc}\label{sec:method-mpc}
Once the coefficient matrix $\hat{\Xi}=\operatorname{PC-SINDy}(\bTheta,\bdX)$ is identified offline via {Algorithm \ref{alg:ca-sindy}} under small perturbations, the obtained $\hat{\Xi}$ can be subsequently employed to design a controller for online frequency control under various disturbances. 
As discussed in Section~\ref{sec:DER-dynamics}, since GFM converters regulate the system frequency via droop or VSG control and GFL converters passively follow it through PLLs, only GFMs participate in frequency regulation. We therefore focus on the GFM frequency dynamics and introduce the following frequency-related quantities: 
\begin{flalign}
    &\tx_i \cceq \bx_i[2]=\omega_i, \, 
    \tcx_i \cceq \{\tx_i \mid \exists \bx_i[1], \bx_i\in\cx_i\},  \,
    \tXi_i \cceq \Xi_i[:,2], \, \hspace*{-10em} && \nonumber\\
    &\hat{\tXi}_i \cceq \hat{\Xi}_i[:,2], 
    && i\in\cn_\mgfm,
\end{flalign}
where $\tx_i=\bx_i[2]$ is the second entry of state vector $\bx_i$, $\hat{\Xi}_i$ is the $i$th block of the estimated matrix $\hat{\Xi}$. 
We define the stacked variables as
\begin{subequations}
\begin{flalign}
    &\tbx \!\cceq\! (\tx_i)_{i\in\cn_\mgfm}, \hspace*{-10em}
    && \tbu \!\cceq\! (\bu_i)_{i\in\cn_\mgfm}, \\
    &\tbtheta(\tbx,\! \tbu) \!\!\cceq\! (\btheta_i(\bx_i,\! \bu_i))_{i\in\cn_\mgfm}, \hspace*{-10em}
    && \tXi \!\cceq\! \diag((\tXi_i)_{i\in\cn_\mgfm}), \label{equ:stack-mpc-2}\\
    &\tcx \!\cceq\! \textstyle\prod_{i\in\cn_\mgfm}\tcx_i, \hspace*{-10em}
    && \tcu \!\cceq\! \textstyle\prod_{i\in\cn_\mgfm}\cu_i,
\end{flalign}
\end{subequations}
and the discrete-time frequency $\tbx^+$ dynamics of GFM converters as
\begin{equation}\label{equ:sindy-mpc-discrete}
    \tbx^{+} = \tbF(\tbx,\tbu)=\tbx + \Delta T^{\mr} \hat{\tXi}^{\top}\tbtheta(\tbx,\tbu).
\end{equation}

\subsubsection{MPC}
At each instant $t \in \ct^\mr \setminus \ct^\mr_{\mathrm{trn}}$, we solve the optimal control problem (OCP) defined below over a prediction horizon of $N$ steps: 
\begin{subequations}\label{equ:ocp}
\begin{flalign}
\text{Obj.}\ 
&\min_{\left(\tbu(k)\right)_{k=0}^{N-1}} 
\sum_{k=0}^{N-1}\ell(\tbx(k), \tbu(k))+V_f(\tbx(N)) \hspace*{-10em}&& \label{equ:ocp-obj}\\
\text{s.t.}\ 
& \tbx(0)=\tbx_t, \tbx(N)\in\tcx_f, && \label{equ:ocp-constraint}\\
& \tbx(k+1) = \tbF(\tbx(k),\tbu(k)), &&k\in \bbn_{0:N-1}, \label{equ:ocp-iteration}\\ 
& \tbx(k)\in \tcx,  \tbu(k)\in\tcu,  \hspace*{-10em} &&k\in \bbn_{0:N-1}. \label{equ:ocp-all-constraint}
\end{flalign}
\end{subequations}
{Solving OCP~\eqref{equ:ocp} yields the receding-horizon MPC
\begin{equation}\label{equ:mpc-controller}
    \bkappa(\tbx) \ceq \tbu^\star(0), \quad \tbu=\bkappa(\tbx),
\end{equation}
which applies the first element $\tbu^\star(0)$ of the optimal control sequence $\left(\tbu^\star(k)\right)_{k=0}^{N-1}$ and shifts the horizon forward.
Here, $\tbx(k)$ and $\tbu(k)$ denote the predicted state and control decision, respectively, at step $k$ (i.e., at time $t + k\Delta T^\mr$), with initial condition $\tbx(0) = \tbx_t$. 
The terminal set $\tcx_f$, stage cost function $\ell(\cdot)$, and terminal cost function $V_f(\cdot)$ are \emph{to be determined}. }

{In the following subsections, we will design suitable $\tcx_f$, $\ell$, and $V_f$ such that the resulting control law $\bkappa(\cdot)$ renders the closed-loop system $\tbx^{+} = \tbF(\tbx, \bkappa(\tbx))$ asymptotically stable with respect to the reference state $\tbx^{\mref}$. }

\subsubsection{Stage cost $\ell(\cdot)$}
Let the stage cost $\ell(\cdot)$ be defined as
\begin{align}
    & \ell(\tbx,\tbu)\ceq \left\| \tbx - \tbx^{\mref} \right\|_{\bm{Q}}^2 + \| \Delta\tbu \|_{\bm{R}}^2, \label{equ:ell}
\end{align}
where 
\begin{align}
    & \Delta \tbu(k)\ceq 
    \begin{cases}
        \bm{0}, & k=0,\\
        \tbu(k)-\tbu(k-1), & k\in \bbn_{1:N-1}.
    \end{cases} 
\end{align}
The reference state vector $\tbx^\mref$ and stage cost weights $\bQ \succeq 0$ and $\bR \succ 0$ are defined as 
\begin{align}\label{equ:weighting-matrices}
& \tbx^{\mref} \ceq \omega_0\bm{1}_{n_{\tbx}}, \quad 
\bQ \ceq q\bfi_{n_{\tbx}}, \quad 
\bR \ceq r\bfi_{n_{\tbu}},
\end{align}
where $\bfi_n$ denotes the $n\times n$ identity matrix. 
All GFM converters share the same reference frequency $\omega_0$ and weighting coefficients $q$ and $r$, {where a larger $q$ leads to more aggressive frequency correction, whereas a larger $r$ results in smoother control actions. 
In our implementation, we set $q=r=1$.} 

\subsubsection{Terminal set $\tcx_f$ and terminal cost $V_f(\cdot)$}
Throughout this subsection, let $i\in\cn_{\mgfm}$.
For each GFM $i$, the stacked model in~\eqref{equ:sindy-mpc-discrete} can be expressed in component form as
\begin{equation}
    \omega^+_i=\tbF_i(\tx_i,\bu_i)=\omega_i+\Delta T^\mr\hat{\omega}^c_i(\omega^s_i-\omega_i+\hat{\sigma}_i(p^s_i-p_i)), 
\end{equation}
where $\hat{\omega}^c_i \ceq \hat{\Xi}_i[2,2]$ and $\hat{\sigma}_i \ceq \hat{\Xi}_i[4,2]/\hat{\Xi}_i[2,2]$.

Let the frequency error be $e_i \ceq \omega_i-\omega_0$ and consider a terminal control law
\begin{align}\label{equ:kappa_f_i-gfm}
    & \bkappa_{f,i}(\tx_i) \ceq 
    \begin{bmatrix}\omega^s_{i}\\ p^s_{i}\end{bmatrix} = 
    \begin{bmatrix}\omega_0 - \alpha_i e_i \\ p_i - \beta_i e_i\end{bmatrix}.
\end{align}
Then the closed-loop error dynamics are
\begin{align}\label{equ:e-dym}
    & e^+_i=\phi_i e_i, \quad   \phi_i \ceq 1-\Delta T^\mr \hat{\omega}^c_i(1+\alpha_i +\hat{\sigma}_i\beta_i). 
\end{align}
Specifically, $\alpha_i,\beta_i>0$ in \eqref{equ:kappa_f_i-gfm} are selected such that $|\phi_i|<1$, which is easy to satisfy given the fact that $\Delta T^\mr$ is a known positive value while $\hat{\omega}^c_i>0$ and $\hat{\sigma}_i>0$ are known from the estimated coefficient matrix $\hat{\Xi}$. 
We define
\begin{subequations}\label{equ:Vf-Xf}
    \begin{align}
        &\bkappa_f(\tbx)\ceq(\bkappa_{f,i}(\tx_i))_{i\in\cn_\mgfm},\\
        &\tcx_f \ceq \textstyle\prod_{i\in\cn_\mgfm} \tcx_{f,i}, \label{equ:cXf}\\
        &V_f(\tbx) \ceq \left\| \tbx - \tbx^{\mref} \right\|_{\bP_f}^2,    
    \end{align}
\end{subequations}
where 
\begin{subequations}\label{equ:gfm-Xf} 
\begin{flalign}
    &\tcx_{f,i} \ceq \{\tx_i \in \tcx_i \mid |e_i|\leq {\varepsilon}, \, \bkappa_{f,i}(\tx_i)\in\cu_i\}, \hspace*{-20em}&&\label{equ:cxfi}\\
    &\bP_f \ceq \diag(p_{f,i})_{i\in\cn_\mgfm},&&\\
    &p_{f,i} \ceq {(1+(\alpha_i^2+\beta_i^2){(1-\phi_i)^2/\phi_i^2})}/{(1-\phi_i^2)}. \hspace*{-10em}&& \label{equ:pf}
\end{flalign}
\end{subequations}
{Here, the small positive threshold $\varepsilon>0$ controls the size of $\tcx_{f,i}$. In our implementation, we set $\varepsilon=0.003$ p.u., corresponding to 0.18 Hz in a 60 Hz system.}

\begin{lemma}[Theorem 2.19(b) of \cite{rawlings2017}]\label{lemma:rpi}
The reference state $\bx^\mref$ is asymptotically stable for the closed loop $\bx^{+}=\bF\big(\bx,\bkappa(\bx)\big)$ provided the following conditions hold:
\begin{enumerate}[label=(\alph*)]
\item for all $\bx\in \cx_f$ there exists a terminal control law $\bkappa_f(\bx)\in\cu$ such that 
    \begin{subequations}
        \begin{flalign}
            & V_f(\bF(\bx,\bkappa_f(\bx)))-V_f(\bx)\leq -\ell(\bx,\bkappa_f(\bx)), \hspace*{-10em}&& \label{equ:condition1-Vf}\\
            & \bF(\bx,\bkappa_f(\bx))\in\cx_f\subseteq\cx, && \label{equ:condition1-rpi}
        \end{flalign}
    \end{subequations} \label{lemma:c1}
\item there exist $\mathcal{K}_\infty$ functions $\alpha_1(\cdot)$ and $\alpha_f(\cdot)$ satisfying 
    \begin{subequations}
        \begin{flalign}
            & \ell(\bx,\bu)\geq \alpha_1(|\bx-\bx^{\mref}|),\hspace*{-10em} && \forall (\bx,\bu)\!\in\! \cx\!\times\!\cu, \label{equ:condition2-ell}\\
            & V_f(\bx)\leq \alpha_f(|\bx-\bx^{\mref}|), \hspace*{-10em} && \forall \bx\in\cx_f, \label{equ:condition2-Vf} 
        \end{flalign}
    \end{subequations} \label{lemma:c2}
\end{enumerate}
\end{lemma}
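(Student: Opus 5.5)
The plan is to follow the standard route for stability of MPC with terminal ingredients, as in Chapter~2 of \cite{rawlings2017}. We use the optimal value function $V_N^0(\bx)$ of the OCP (with $\tbF,\tcx,\tcu,\ell,V_f,\tcx_f$ replaced by the generic $\bF,\cx,\cu,\ell,V_f,\cx_f$) as a Lyapunov function for the closed loop $\bx^+=\bF(\bx,\bkappa(\bx))$ on the feasible set $\cx_N$. The argument has three steps: recursive feasibility, a descent inequality, and two-sided $\mathcal{K}_\infty$ bounds. Together these invoke the discrete-time Lyapunov theorem.

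\emph{Step 1 (feasibility and descent).} Take $\bx\in\cx_N$ with optimal sequence $(\bu^\star(0),\dots,\bu^\star(N-1))$ and predicted terminal state $\bx^\star(N)\in\cx_f$. At the successor $\bx^+=\bF(\bx,\bu^\star(0))$, form the shifted candidate $(\bu^\star(1),\dots,\bu^\star(N-1),\bkappa_f(\bx^\star(N)))$. By condition~\ref{lemma:c1}, namely \eqref{equ:condition1-rpi}, the new terminal state $\bF(\bx^\star(N),\bkappa_f(\bx^\star(N)))$ lies in $\cx_f\subseteq\cx$, and $\bkappa_f(\bx^\star(N))\in\cu$. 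Hence the candidate is feasible and $\cx_N$ is positively invariant. Comparing its cost with the optimal cost, the stage terms cancel telescopically and leave
\begin{equation*}
V_N^0(\bx^+)-V_N^0(\bx)\le -\ell(\bx,\bkappa(\bx)) + \big[V_f(\bF(\bx^\star(N),\bkappa_f(\bx^\star(N))))-V_f(\bx^\star(N))+\ell(\bx^\star(N),\bkappa_f(\bx^\star(N)))\big].
\end{equation*}
The bracket is nonpositive by \eqref{equ:condition1-Vf}. Combined with \eqref{equ:condition2-ell}, this gives $V_N^0(\bx^+)-V_N^0(\bx)\le-\alpha_1(|\bx-\bx^\mref|)$.

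\emph{Step 2 (bounds).} The lower bound $V_N^0(\bx)\ge\ell(\bx,\bu^\star(0))\ge\alpha_1(|\bx-\bx^\mref|)$ is immediate. For the upper bound, first show $V_N^0\le V_f$ on $\cx_f$. Do this by induction on the horizon: using the terminal law $\bkappa_f$ repeatedly gives a feasible sequence whose cost is at most $V_f(\bx)$, again by \eqref{equ:condition1-Vf}. Then \eqref{equ:condition2-Vf} yields $V_N^0(\bx)\le\alpha_f(|\bx-\bx^\mref|)$ on $\cx_f$. Since $\cx_f$ contains a neighbourhood of $\bx^\mref$ (relative to $\cx$), this local upper bound, together with the descent and lower bounds, gives Lyapunov stability and attractivity. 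That is asymptotic stability of $\bx^\mref$, at least on $\cx_f$. Extending the bound to all of $\cx_N$ is Proposition~2.16 in \cite{rawlings2017} and requires $\cx_N$ bounded or $V_N^0$ locally bounded.

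\emph{Main obstacle.} The delicate point is the upper bound beyond $\cx_f$. I would handle it by citing the compactness/boundedness argument of \cite{rawlings2017}, or by stating the conclusion as asymptotic stability with region of attraction $\cx_N$ built from the local bound. A second subtlety is specific to this paper: the stage cost \eqref{equ:ell} penalizes $\Delta\tbu$, so $\ell$ depends on the previous input. The argument should therefore be run on the augmented state $(\tbx,\tbu_{\text{prev}})$. Alternatively, note that $\Delta\tbu(0)=\bm{0}$ and that the $\Delta\tbu$ terms only enlarge the cost, so the lower bound \eqref{equ:condition2-ell} still holds in $\tbx$. In that case I would check that the shifted candidate's first increment does not break the telescoping, which I expect to be the step needing care.
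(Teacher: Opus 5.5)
Your proposal is correct and is essentially the argument behind the result the paper simply cites (Theorem~2.19 of \cite{rawlings2017}). The paper gives no proof of its own; the cited argument uses the optimal value function as a Lyapunov function, the shifted candidate $(\bu^\star(1),\dots,\bu^\star(N-1),\bkappa_f(\bx^\star(N)))$ for recursive feasibility and descent, and $V_N^0\le V_f$ on $\cx_f$ via repeated application of $\bkappa_f$, exactly as you do.

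You are also right that the paraphrased lemma omits hypotheses that the standard result carries and that the argument genuinely needs. One is nonnegativity of $V_f$, which your lower bound uses. The other is some upper bound on $V_N^0$ near $\bx^{\mref}$: either your condition that $\cx_f$ contains a neighbourhood of $\bx^{\mref}$, or a bound over the whole feasible set. Without the latter, Lyapunov stability can fail, even though attractivity on the whole feasible set already follows from your descent inequality alone. Your $\Delta\tbu$ concern belongs to the application in Theorem~\ref{theorem:1}, not to this generic lemma, where $\ell$ is a function of $(\bx,\bu)$ only.
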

\vspace{-1.5em}
Lemma~\ref{lemma:rpi} provides sufficient conditions for asymptotic stability of a general closed-loop system $\bx^{+} = \bF\big(\bx,\bkappa(\bx)\big)$. By applying this lemma to our closed-loop MG system, we obtain Theorem~\ref{theorem:1}.
\begin{theorem}\label{theorem:1}
Let $\bkappa_f$, $\tcx_f$, and $V_f$ be defined as in~\eqref{equ:Vf-Xf}. 
{If the OCP~\eqref{equ:ocp} is feasible at the initial time $t_0 \in \ct^\mr \setminus \ct^\mr_\mtrn$, then for all $t \in \ct^\mr \setminus \ct^\mr_\mtrn$ with $t \ge t_0$, OCP~\eqref{equ:ocp} remains feasible, and} the control law $\bkappa(\cdot)$ obtained by solving~\eqref{equ:ocp} renders the closed-loop system $\tbx^{+} = \tbF(\tbx, \bkappa(\tbx))$ asymptotically stable with respect to the reference state $\tbx^{\mref}$.
\end{theorem}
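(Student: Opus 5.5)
The plan is to verify conditions \ref{lemma:c1} and \ref{lemma:c2} of Lemma~\ref{lemma:rpi} for the closed loop $\tbx^{+}=\tbF(\tbx,\bkappa(\tbx))$, with $\bkappa_f$, $\tcx_f$, $V_f$ as in~\eqref{equ:Vf-Xf}. Recursive feasibility then follows from the standard shifted-sequence argument. If $(\tbu^\star(k))_{k=0}^{N-1}$ is optimal at time $t$, its predicted terminal state satisfies $\tbx^\star(N)\in\tcx_f$. The candidate $(\tbu^\star(1),\dots,\tbu^\star(N-1),\bkappa_f(\tbx^\star(N)))$ is then feasible at $t+\Delta T^\mr$, because the nominal model~\eqref{equ:sindy-mpc-discrete} is the prediction model and $\tcx_f$ is positively invariant under $\bkappa_f$ with admissible inputs. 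So everything reduces to checking~\eqref{equ:condition1-Vf}--\eqref{equ:condition2-Vf}.

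\textbf{Invariance.} Since all objects are diagonal across GFMs, I work per unit $i$. Under $\bkappa_{f,i}$ the error obeys $e_i^+=\phi_ie_i$ with $|\phi_i|<1$ by~\eqref{equ:e-dym}. Hence $|e_i|\le\varepsilon$ implies $|e_i^+|\le\varepsilon$. To get $\tx_i^+\in\tcx_i$ and $\bkappa_{f,i}(\tx_i^+)\in\cu_i$, I will use that $\varepsilon$ is chosen small enough that the $\varepsilon$-ball around $\omega_0$ lies in $\tcx_i$ and $\kappa_{f,i}$ maps it into $\cu_i$. The membership condition in~\eqref{equ:cxfi} is then preserved, since the admissible set of $e_i$ is an interval containing $0$ and $|e_i^+|\le|e_i|$. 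This also gives $\tcx_f\subseteq\tcx$, so~\eqref{equ:condition1-rpi} holds.

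\textbf{Decrease condition.} I compute $V_f(\tbF)-V_f=\sum_i p_{f,i}(\phi_i^2-1)e_i^2$. For the stage cost I must bound $\ell=\sum_i(qe_i^2+r\|\Delta\bu_i\|^2)$ with $q=r=1$. Here $\Delta\bu_i$ under the terminal law is the change of $\bkappa_{f,i}$ between consecutive steps, namely $-(\alpha_i,\beta_i)(e_i^+-e_i)$ up to the $p_i$ variation. Its squared norm is $(\alpha_i^2+\beta_i^2)(1-\phi_i)^2e_i^2$, which, expressed relative to the previous error, carries the factor $(1-\phi_i)^2/\phi_i^2$. The choice~\eqref{equ:pf} gives exactly $p_{f,i}(1-\phi_i^2)=1+(\alpha_i^2+\beta_i^2)(1-\phi_i)^2/\phi_i^2$. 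This yields $V_f(\tbF)-V_f\le-\ell$ term by term. I expect this bookkeeping to be the main obstacle, because $\Delta\tbu$ depends on the previous input, so $\ell$ is not a function of $(\tbx,\tbu)$ alone. I would handle it by bounding $\|\Delta\bu_i\|^2$ in the worst case by the error-proportional expression above, with the $1/\phi_i^2$ factor absorbing the index shift. I would also treat $p_i$ as held constant over one step of the terminal law, as the model does.

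\textbf{Class-$\mathcal{K}_\infty$ bounds.} Condition~\eqref{equ:condition2-ell} holds with $\alpha_1(s)=qs^2$, since $\ell\ge q\|\tbx-\tbx^\mref\|^2$ and $\bR\succ0$. Condition~\eqref{equ:condition2-Vf} holds with $\alpha_f(s)=\max_ip_{f,i}\,s^2$, where $p_{f,i}>0$ because $|\phi_i|<1$. Invoking Lemma~\ref{lemma:rpi} then gives asymptotic stability of $\tbx^\mref$ for all $t\ge t_0$.
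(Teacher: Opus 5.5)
Your route is the paper's: a per-GFM check of conditions \ref{lemma:c1}--\ref{lemma:c2} of Lemma~\ref{lemma:rpi}. It rests on three things: the identity $p_{f,i}(1-\phi_i^2)=1+(\alpha_i^2+\beta_i^2)(1-\phi_i)^2/\phi_i^2$, which matches $\ell_i$ only because $q=r=1$; invariance from $|e_i^+|=|\phi_i|\,|e_i|$; and quadratic $\mathcal{K}_\infty$ bounds. Your $\alpha_f(s)=\max_i p_{f,i}s^2$ is cleaner than the paper's choice, though it needs $\phi_i\neq 0$ for $p_{f,i}$ to exist. Two of your steps are additions the paper leaves implicit: the shifted-sequence argument for recursive feasibility, and the assumption that the whole interval $|e_i|\le\varepsilon$ is admissible. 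You genuinely need that assumption. An interval containing $0$ plus $|e_i^+|\le|e_i|$ does not keep $e_i^+$ in the interval when $\phi_i<0$ and the interval is asymmetric. It is asymmetric for the frequency limits, and its $p^s_i$-part even moves with $p_i$.

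The step you flag as the main obstacle is not closed by what you propose. The expression $(\alpha_i^2+\beta_i^2)(1-\phi_i)^2e_i^2/\phi_i^2$ is not a worst-case bound on $\|\Delta\bu_i\|^2$. It is the exact value only when the previous input was $\bkappa_{f,i}(\tx_i^-)$ with $e_i=\phi_ie_i^-$ and $p_i$ unchanged. For $N\ge 2$, your candidate $(\tbu^\star(1),\dots,\tbu^\star(N-1),\bkappa_f(\tbx^\star(N)))$ together with the convention $\Delta\tbu(0)=\bm{0}$ gives
\begin{align*}
&V_N^0(\tbx^+)-V_N^0(\tbx)\\
&\quad\le-\|\be(0)\|_{\bQ}^2-\|\tbu^\star(1)-\tbu^\star(0)\|_{\bR}^2\\
&\qquad+\|\be(N)\|_{\bQ}^2+\|\bkappa_f(\tbx^\star(N))-\tbu^\star(N-1)\|_{\bR}^2\\
&\qquad+V_f(\tbx^\star(N+1))-V_f(\tbx^\star(N)),
\end{align*}
with $\be(k)\ceq\tbx^\star(k)-\tbx^{\mref}$ and $\tbx^\star(N+1)\ceq\tbF(\tbx^\star(N),\bkappa_f(\tbx^\star(N)))$.

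Here $\tbu^\star(N-1)$ is whatever the optimizer chose; the model has a whole line of equilibrium inputs and the OCP does not penalize input levels. So $\|\bkappa_f(\tbx^\star(N))-\tbu^\star(N-1)\|$ is not controlled by $|\be(N)|$, and no choice of $p_{f,i}$ makes the last two lines nonpositive. Put differently, $\ell$ depends on the previous input, so Lemma~\ref{lemma:rpi}, which needs $\ell=\ell(\bx,\bu)$, does not apply as stated.

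The paper's proof does not supply the missing idea either. It makes the same substitution $\bu_i^-=\bkappa_{f,i}(\tx_i^-)$, and it records the result as a lower bound on $\ell_i$, which is the wrong direction for \eqref{equ:condition1-Vf}.

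A rigorous version needs an extra ingredient. One option is to add to the OCP the coupled terminal constraint $\|\bkappa_f(\tbx(N))-\tbu(N-1)\|^2\le\sum_i(\alpha_i^2+\beta_i^2)(1-\phi_i)^2e_i(N)^2/\phi_i^2$, where $e_i(N)$ is the $i$th entry of $\tbx(N)-\tbx^{\mref}$. This works for three reasons:
\begin{itemize}
\item your own computation shows the shifted candidate satisfies it again at the next step;
\item it makes the last two lines of the display above nonpositive;
\item because $\Delta\tbu(0)=\bm{0}$, the value function still depends on $\tbx$ alone, so no state augmentation is needed.
\end{itemize}
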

\begin{proof}
Throughout this proof, let $i\in\cn_{\mgfm}$.
We will prove that conditions \ref{lemma:c1}–\ref{lemma:c2} of Lemma~\ref{lemma:rpi} hold. Then, we invoke Lemma~\ref{lemma:rpi} to conclude the theorem.

\textit{1) Condition \ref{lemma:c1}:}
Consider the component-wise cost functions
\begin{subequations}
\begin{align}
  & \ell_i(\tx_i,\bu_i)\ceq qe_i^2 + \| \Delta \bu_i \|^2_{\diag(r,r)}, \\
  & V_{f,i}(\tx_i) \ceq p_{f,i} e_i^2. 
\end{align}
\end{subequations}
where 
\begin{align}\label{equ:delta-ui}
    & \Delta \bu_i(k)\ceq 
    \begin{cases}
        \bm{0}, & k=0,\\
        \bu_i(k)-\bu_i(k-1), & k\in \bbn_{1:N-1}.
    \end{cases} 
\end{align}
For all $\tx_i\in\tcx_{f,i}$, there exists a terminal control law $\bkappa_{f,i}$ defined in~\eqref{equ:kappa_f_i-gfm} such that 
\begin{subequations}
\begin{flalign}
    &V_{f,i}(\tbF_i(\tx_i,\bkappa_{f,i}(\tx_i)))-V_{f,i}(\tx_i) \overset{\eqref{equ:e-dym}}{=}p_{f,i}e_i^2(\phi^2_i-1) &&\nonumber\\
    &\hspace{1em}\overset{\eqref{equ:pf}}{=}-{(1+(\alpha_i^2+\beta_i^2){(1-\phi_i)^2}/\phi_i^2)}e_i^2, \label{equ:c1-Vfi} &&\\
    &\ell_{i}(\tx_i,\bkappa_{f,i}(\tx_i)) \overset{\eqref{equ:delta-ui}}{\ge} qe_i^2+\|\bkappa_{f,i}(\tx_i)-\bkappa_{f,i}(\tx_i^-)\|^2_{\diag(r,r)} \nonumber&&\\
    &\hspace{1em}= qe_i^2+
    \left\|\begin{bmatrix} \omega_0-\alpha_i e_i\\p_i-\beta_i e_i \end{bmatrix} - 
    \begin{bmatrix} \omega_0-\alpha_i e_i^-\\p_i-\beta_i e_i^- \end{bmatrix}\right\|^2_{\diag(r,r)} \nonumber&&\\
    &\hspace{1em}\overset{\eqref{equ:e-dym}}{=}qe_i^2+
    \left\|\begin{bmatrix} \alpha_i e_i({1}/{\phi_i}-1)\\\beta_i e_i({1}/{\phi_i}-1) \end{bmatrix}\right\|^2_{\diag(r,r)} \nonumber&&\\
    &\hspace{1em}={(1+(\alpha_i^2+\beta_i^2){(1-\phi_i)^2}/\phi_i^2)}e_i^2. \label{equ:c1-elli}&&
\end{flalign}
\end{subequations}

Combining~\eqref{equ:c1-Vfi} and~\eqref{equ:c1-elli} yields
\begin{equation}\label{equ:condition1-Vfi}
    V_{f,i}(\tbF_i(\tx_i,\bkappa_{f,i}(\tx_i)))-V_{f,i}(\tx_i)
  \le-\ell_i(\tx_i,\bkappa_{f,i}(\tx_i)). 
\end{equation}
Summing~\eqref{equ:condition1-Vfi} over all $i\in\cn_\mgfm$ yields the stacked relation~\eqref{equ:condition1-Vf}.

For all $\tx_i\in\tcx_{f,i}$, we have
\begin{align}
    &|e_i^+|=|\tbF_i(\tx_i,\bkappa_{f,i}(\tx_i))-\omega_0|=|\phi_ie_i| \overset{|\phi_i|<1}{<}|e_i|.
\end{align}
Hence, the successor state $\tbF_i(\tx_i,\bkappa_{f,i}(\tx_i))$ satisfies the terminal constraint \eqref{equ:cxfi}, i.e.,
\begin{align}
    &\tbF_i(\tx_i,\bkappa_{f,i}(\tx_i))\in\tcx_{f,i},  \nonumber\\
    \Rightarrow& \left(\tbF_i(\tx_i,\bkappa_{f,i}(\tx_i))\right)_{i\in\cn_\mgfm}\in \textstyle\prod_{i\in\cn_\mgfm}\tcx_{f,i},  \nonumber\\
    \Rightarrow& \tbF(\tbx,\bkappa_{f}(\tbx))\in\tcx_{f} \subseteq \tcx,
\end{align}
thereby~\eqref{equ:condition1-rpi} holds.
As such, condition \ref{lemma:c1} of Lemma~\ref{lemma:rpi} is satisfied. 

\textit{2) Condition \ref{lemma:c2}:}
Let 
\begin{subequations}
\begin{flalign}
&\be\ceq\tbx-\tbx^\mref, &&\\
&\alpha_1\!\left(|\be|\right) \ceq \|\be\|_{\bQ/2}^2 \in \mathcal{K}_\infty, \;\;
\alpha_f\!\left(|\be|\right) \ceq \|\be\|_{2\bP_f}^2 \in \mathcal{K}_\infty .\hspace*{-10em}&&
\end{flalign}
\end{subequations}
Since $\ell(\tbx,\tbu)$ and $V_f(\tbx)$ are quadratic with $\bQ\succeq0$, $\bR\succ0$, and $\bP_f\succeq0$, we have
\begin{subequations} 
\begin{align} 
&\ell(\tbx,\tbu)>\|\be\|_{\bQ}^2\ge\|\be\|_{\bQ/2}^2 =\alpha_1\big(|\be|\big),\\ 
& V_f(\tbx)=\|\be\|_{\bP_f}^2\le\|\be\|_{2\bP_f}^2 =\alpha_f\big(|\be|\big). 
\end{align} 
\end{subequations}
Hence condition~\ref{lemma:c2} of Lemma~\ref{lemma:rpi} is satisfied.

Here we have verified conditions \ref{lemma:c1}–\ref{lemma:c2} of Lemma~\ref{lemma:rpi}, and thus conclude the theorem by invoking the lemma.
\end{proof}

\begin{remark}\label{remark:mpc-complexity}
The computational cost of solving OCP \eqref{equ:ocp} increases with both the prediction horizon $N$ and the number of GFMs $N_{\mgfm}$. Since PC-SINDy linearizes the MG frequency dynamics, OCP \eqref{equ:ocp} reduces to a quadratic program (QP) at each time step \cite{jerez2012}, with complexity $\mathcal{O}(N^3 N_{\mgfm}^3)$. By exploiting the block-diagonal structure $\tXi \ceq \diag((\tXi_i)_{i\in\cn_\mgfm})$ in \eqref{equ:stack-mpc-2}, OCP \eqref{equ:ocp} decouples across GFMs and can be solved independently for each GFM $i$. The complexity then reduces to $\mathcal{O}(N^3 N_{\mgfm})$, i.e., cubic in $N$ but linear in $N_{\mgfm}$. Even for a large MG with $N_{\mgfm}=20$, the computation time per control step is about $6$~ms, which is below $\Delta T^\mr=1/120$~s, confirming the real-time feasibility of PC-SINDYc.
\end{remark}

\begin{remark} \label{remark:N-selection}
Typically, the feasibility of OCP ~\eqref{equ:ocp} at the initial time $t_0$ is easy to satisfy. Even if OCP~\eqref{equ:ocp} is infeasible at the initial time $t_0$, feasibility can be restored by %
increasing $N$ or enlarging the terminal set $\tcx_f$, e.g., increasing $\varepsilon$ or adjusting $\alpha_i$ and $\beta_i$ in $\bkappa_{f,i}(\tx_i)$ for $i\in\cn_\mgfm$ (see \eqref{equ:kappa_f_i-gfm} and \eqref{equ:cxfi}).

In our simulations, $N = 5$ is sufficient to handle all large disturbances %
due to the moderately large frequency-regulation deadband $\varepsilon = 0.003\text{ p.u.}$ $( 0.18\text{ Hz})$, which lies well within the range $0.015\text{ Hz} \leq \varepsilon \leq 0.96\text{ Hz}$ suggested by IEEE Std 2800-2022~\cite{IEEE2022}, Table~7. This relatively large terminal set %
allows the controller to drive the state into $\tcx_f$ within two steps in our simulations, indicating $N=5$ is adequate. %
Also, $N$ cannot be too large, because the computational complexity scales as $\mathcal{O}(N^3)$ as discussed in Remark~\ref{remark:mpc-complexity}. Therefore, $N=5$ represents a good trade-off between control performance and
computational cost. 

\end{remark}

\begin{remark}\label{remark:theorem-limitation}
The asymptotic stability established in Theorem~\ref{theorem:1} applies to the identified discrete-time model \eqref{equ:sindy-mpc-discrete}, rather than to the real-world MG frequency dynamics. %
Mismatch between the identified \eqref{equ:sindy-mpc-discrete} and real-world MG dynamic model may arise from PMU noise,  %
errors between $\Xi$ and $\hXi$, and discretization in \eqref{equ:sindy-mpc-discrete}. Nevertheless, Section~\ref{sec:sim-xi} shows that the proposed PC-SINDy accurately estimates $\hXi$ despite PMU noise, and Section~\ref{sec:sim-prediction} demonstrates accurate prediction of future states. Consequently, the proposed PC-SINDYc maintains good control performance and stability in practice (Sections~\ref{sec:sim-control}--\ref{sec:sim-multiple-ders}). A rigorous robustness guarantee under model mismatch would require more advanced tools, such as tube-based MPC that is planned in our future work.

\end{remark}

\subsection{Step-by-Step Implementation}
Here is the step-by-step procedure of the PC-SINDYc framework for system identification and frequency control.
\begin{algorithm}[htbp]
\caption{PC-SINDYc framework for frequency-dynamics identification and frequency control.}
\label{alg:step-by-step}
\begin{enumerate}[label=\textbf{Step~\arabic*:},ref=\textbf{Step~\arabic*}, leftmargin=0.7em, align=left, labelsep=0.5em]
\item Generate the small-perturbation signal $\{\bu_t\}_{t\in\ct_\mtrn}$ to excite the DER-based MG during the offline training period $\ct_\mtrn$ (Section~\ref{sec:method-activation}). \label{step:1}
\item Use PMUs to collect measurements over $\ct_\mtrn$ and construct the data matrices $\bTheta$ and $\bdX$ as in~\eqref{equ:bTheta-bdX} (Section~\ref{sec:PMU}). \label{step:2}
\item Identify the \emph{true} coefficient matrix via the PC-SINDy algorithm: $\hat{\Xi}=\operatorname{PC-SINDy}(\bTheta,\bdX)$ (Section~\ref{sec:method-regression}). \label{step:3}
\item With the identified coefficient matrix $\hat{\Xi}$, set up OCP~\eqref{equ:ocp} with weighting matrices from~\eqref{equ:weighting-matrices} and terminal cost/set from~\eqref{equ:Vf-Xf}; its solution yields MPC~\eqref{equ:mpc-controller}, which guarantees asymptotic stability (Section~\ref{sec:method-mpc}).
\end{enumerate}
\end{algorithm}

Note that $\hat{\Xi}$ identified in \ref{step:3} contains the true physical system coefficients. Once identified, the SINDy structure $\dot{\bx} = \Xi^\top \btheta(\bx, \bu)$, driven by real-time measurements, can be directly used to design an MPC for online frequency control. This enables the MPC to handle large disturbances unseen during training and avoids retraining common in machine learning methods.

\section{Case Study}\label{sec:simulation}

\begin{figure}[htbp]
    \centering
    \includegraphics[width=\linewidth]{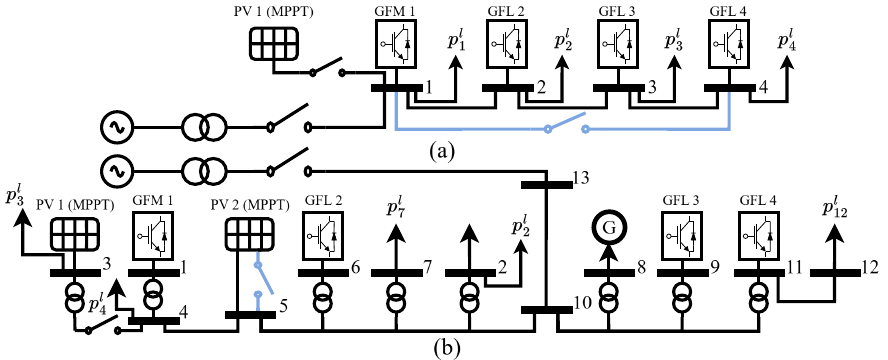}
    \caption{(a) The 4-bus MG and (b) 13-bus MG.} 
    \label{fig:MG-circuit}
\end{figure}

In this section, we apply the proposed PC-SINDYc framework to several DER-based MGs and compare it against existing methods. 
For identification, we benchmark against a SINDy-based method that employs a conventional, intuitively developed library of candidate functions \cite{nandakumar2023}; we refer to this benchmark as \emph{conventional SINDy} for brevity.
For frequency control, we compare with MPC built on the conventional SINDy model \cite{nandakumar2023}, referred to as \emph{conventional SINDYc}, a well-tuned PI controller \cite{savaghebi2012}, and state-of-the-art RL controllers \cite{rodriguez-martinez2025}.

In MATLAB/Simulink, we build two testbeds: a low-voltage 4-bus MG and a medium-voltage 13-bus MG with transformers and diesel generators to emulate more complex electromagnetic scenarios (Figs.~\ref{fig:MG-circuit}(a)–(b)). Both MGs include GFM and GFL converters, PV plants operating in maximum power point tracking (MPPT) mode, and loads $p^l_i$ at buses $i \in \cn_\mload$ that randomly oscillate around 0.5~p.u. with magnitude 0.01~p.u. (see \eqref{equ:psi}):
\begin{align}
p^l_{i,t}
&\ceq 0.5+\psi(t;0.01,50,50\pi),
&i \in \cn_\mload,\ t \in \ct_\mtrn.
\label{equ:load-excitation}
\end{align}

The converters, including their inner control loops and filters, are modeled in detail to simulate electromagnetic-transient (EMT) dynamics. The GFM converter operates in either droop or VSG mode, and the GFL converter uses either a single-phase or three-phase PLL. 
A PMU is installed at the point of common coupling of each converter.
Each MG connects to a distribution grid through a breaker and transformer, enabling grid-connected or islanded operation. The rated frequency is \(f_0=60~\text{Hz}\); the remaining MG parameters follow \cite{gong2023a}.

\subsection{System Identification by PC-SINDy Under Noise and Low-Sampling Rate}\label{sec:sim-xi}
We perform identification on the 4-bus MG with GFM~1 operating in droop mode and GFLs~2–4 using three-phase PLLs. 
In \ref{step:1}, we excite the MG with small input perturbations $\bu$ of amplitude $\pm 0.01$ p.u. over $T_\mtrn=10~\mathrm{s}$ \eqref{equ:excitation}. The input $\bu$ oscillates in the range $(-50\pi,\,50\pi)\,\mathrm{rad/s}$. The PMU reporting rate is $F^\mr=120$ Hz according to IEEE Std C37.118.1-2011~\cite{IEEE2011}. 

In \ref{step:2}, we adopt PMU accuracy consistent with IEEE Std C37.118.1-2011: angular-frequency error within $\pm 0.031~\mathrm{rad/s}$ and its derivative error within $ \pm 0.063~\mathrm{rad/s^2}$. For measurements not specified by the standard (e.g., three-phase voltages), noise levels are set to match \cite{gong2023a}. It should be noted that the library matrix $\bTheta(\bX,\bU)$ and the derivative matrix $\bdX$ are computed numerically from these \emph{noise-corrupted} measurements according to~\eqref{equ:num-cal-library}.

In \ref{step:3}, we apply $\operatorname{PC-SINDy}(\cdot)$ to estimate $\hat{\Xi}$ and compare it with the analytical matrix $\Xi$~\eqref{equ:whole-defs} (Fig.~\ref{fig:Xi-all}). 
Clearly, $\hat{\Xi}$ closely matches $\Xi$, with {a small relative Frobenius norm error $\|\hXi-\Xi\|_F/\|\Xi\|_F=1.61\%$}, indicating that we \emph{accurately} identify the \emph{true} frequency-dynamics model from noise-corrupted measurements without overfitting.  
The identification takes only $8.6$~ms in \textsc{MATLAB}~R2024a on an Intel Core i7-7700 at 3.60~GHz.

\begin{figure}[htbp]
    \centering
    \includegraphics[width=0.9\linewidth]{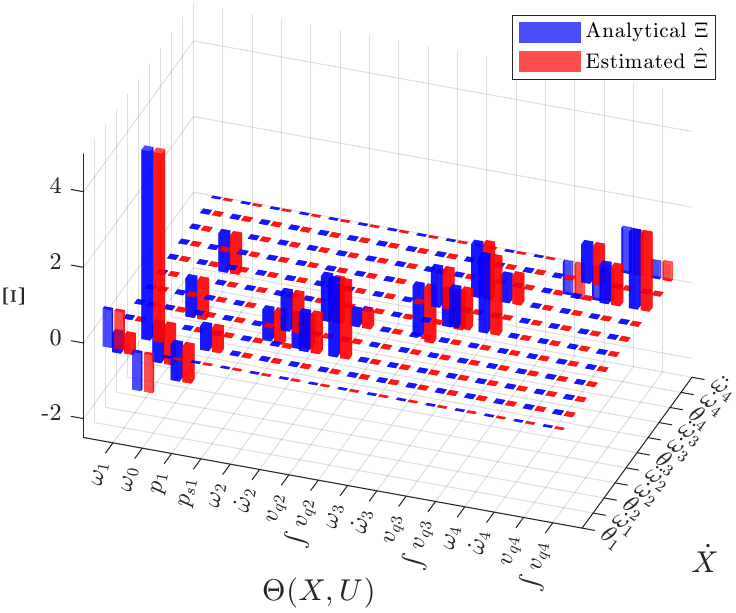}
    \caption{Comparison of true \(\Xi\) and PC-SINDy-estimated \(\hat{\Xi}\) using normalized high-quality and low-quality measurements.}
    \label{fig:Xi-all}
\end{figure}

\subsection{Frequency Prediction}\label{sec:sim-prediction}

In this subsection, we validate the prediction accuracy of the identified model under {large-signal disturbances that has never been seen during the identification process}.
Unlike the training scenario, which contains only small perturbations of $\pm 0.01$~p.u., the validation includes large disturbances unseen during training to evaluate the identified model’s generalization. 
The validation begins at $t=10$ s.
At $t=10.5$ s, the load at bus 3 steps up by $0.3$ p.u.; at $t=11$ s, PV 1, operating at 0.5 p.u. of its rated power, connects to the MG. 
During $t \in [11.5,11.8]$~s, {the perturbation amplitude of the load at bus~2} is increased by a factor of 10 (from $\pm 0.01$ p.u.\ to $\pm 0.1$ p.u.) and then returned to $\pm 0.01$ p.u.\ at $t = 11.8$~s.
At $t=12$ s, {the auxiliary line (blue in Fig.~\ref{fig:MG-circuit}) is switched in (breaker closed)}, resulting in a topology change; the MG then connects to the AC distribution grid at $t=12.5$ s and disconnects at $t=13$ s.

\begin{figure}[htbp]
    \centering
    \includegraphics[width=\linewidth]{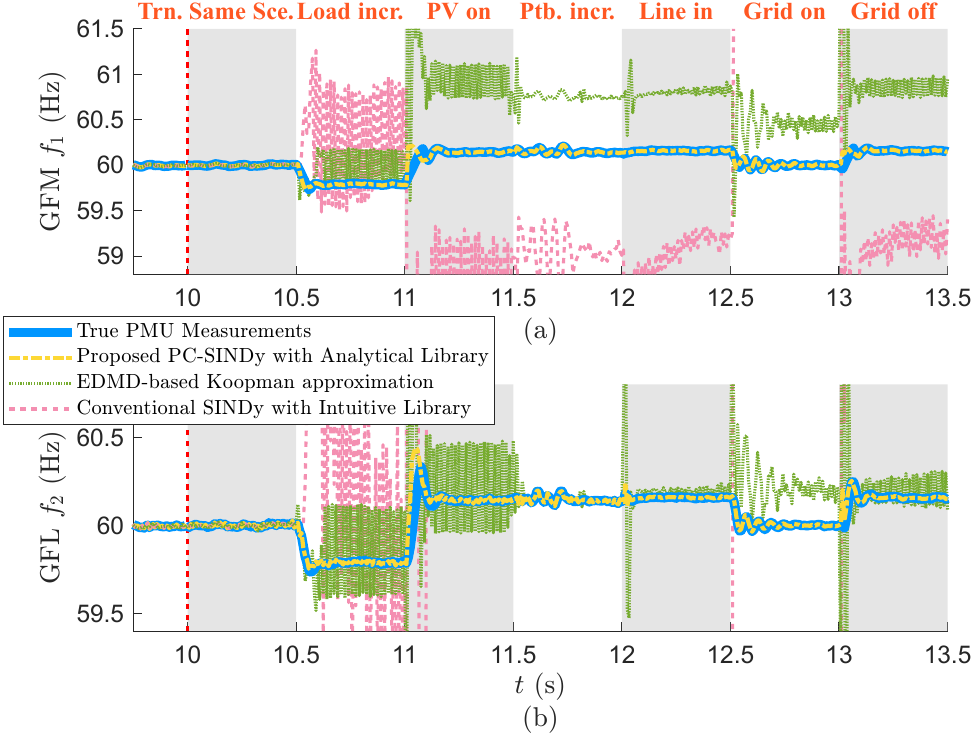}
    \caption{Predicted GFM frequency $f_1$ and GFL frequency $f_2$ using the proposed PC-SINDy model, {EDMD-based Koopman approximation~\cite{arbabi2018}, }and the conventional SINDy benchmark~\cite{nandakumar2023}.}
    \label{fig:4bus-prediction}
\end{figure}
We employ the identified coefficient matrix $\hat{\Xi}$ from \ref{step:3} to predict future states
\begin{equation}\label{equ:sindy-discrete}
    \bx^{+} = \bx + \Delta T^{\mr}\,\hat{\Xi}^{\top}\btheta(\bx,\bu),
\end{equation}
\newcommand{\mben}{\mathrm{bm}}
and compare the predictions against two benchmarks: an extended dynamic mode decomposition (EDMD)–based Koopman approximation with delay embedding~\cite{arbabi2018}, and conventional SINDy~\cite{nandakumar2023}.

The EDMD-based Koopman approximation~\cite{arbabi2018} uses the same state $\bx$ and input $\bu$ variables as the proposed PC-SINDYc framework does and includes PMU measurements into its observable vector $\bm h(\bx)$:
\begin{equation}\label{equ:bm-h}
    \bm h(\bx)=(\omega_i,~ \dot{\omega}_i,~ V_i, ~\theta_i, ~ I_i, ~ \phi_i)_{i\in\cn} ,
\end{equation}
where $I_i$ and $\phi_i$ denote the current magnitude and current phase angle, respectively, of DER $i$. 
Then,~\cite{arbabi2018} augments the observable dictionary $\bm h$ with time delays and nonlinear functions: let $n_d\in\mathbb{N}^+$ denote the number of delays, the delay embedded vector $\bm\zeta_t$ stacks the most recent $n_d{+}1$ measurement and input snapshots
\begin{flalign}\label{equ:bm-zeta}
&\bm \zeta_t =
\left[
{\bm h(\bx_{t-n_d\Delta T^\mr}})^{\top}\; 
\cdots \; \bm h (\bx_{t})^\top \; \bu^{\top}_{t-n_d\Delta T^\mr} \; \cdots \; \bu^{\top}_{t}
\right]^\top, \hspace*{-5cm}\notag\\
& &&t\in\{t> n_d\Delta T^\mr\}\cap\ct^\mr_\mtrn.
\end{flalign}
We then lift $\bm \zeta$ using the nonlinear observable dictionary ${\bm g}(\bm \zeta)$, which depends meaningfully on $\bu_{t-n_d\Delta T^\mr},\ldots,\bu_{t}$, enabling EDMD to approximate the dynamics of the extended state space and discern the effect of previous inputs on state evolution.
In our implementation, we follow Example~1 of \cite{arbabi2018} to choose $n_d=5$ and $ {\bm g}(\bm \zeta) = [\bm \zeta^\top,~ \tfrac{1}{n_{\bm \zeta}}\|\bm \zeta\|_2, ~ 1]^\top$.

The conventional SINDy~\cite{nandakumar2023}, i) constructs the library $\btheta^{\mathrm{int}}$ using intuitively developed polynomial–trigonometric functions, ii) employs the original SINDy algorithm (Code~1 in~\cite{brunton2016a}) for identification, and iii) uses the same numerical method \eqref{equ:sindy-discrete} for prediction. Its library is defined as
\begin{equation}\label{equ:library-conventional}
\btheta^{\mathrm{int}}(\bx,\bu)\ceq \begin{bmatrix}
1; \bm{P}_2(\bx,\bu); \sin(\bx,\bu); \cos(\bx,\bu)\\
\end{bmatrix}.  
\end{equation}

Fig.~\ref{fig:4bus-prediction} presents the predicted frequencies of GFM~1 and GFL~2. 
The noisy PMU-measured frequencies ($f_1$ and $f_2$) are shown as solid blue lines, while the predictions obtained from $\operatorname{PC-SINDy}$, {EDMD-based Koopman approximation, and conventional SINDy are shown as yellow dash-dotted, green dotted, and pink dashed lines, respectively.} 
{All three methods} predict the frequencies accurately when the validation scenario is under the same configuration as the training one, i.e., for $t\in[10,10.5)$~s. 
However, starting from the first large transient at $t=10.5$~s, {both the approximated Koopman operator and }the conventional SINDy fail to capture the system dynamics, driving the predicted frequency outside the scale of Fig.~\ref{fig:4bus-prediction}. 
In contrast, PC-SINDy accurately tracks the true measurements even during large transients that are not present in the training data, demonstrating its strong generalization capability.

\subsection{Frequency Control Results by PC-SINDYc}\label{sec:sim-control}

\begin{figure}[htbp]
    \centering
    \includegraphics[width=\linewidth]{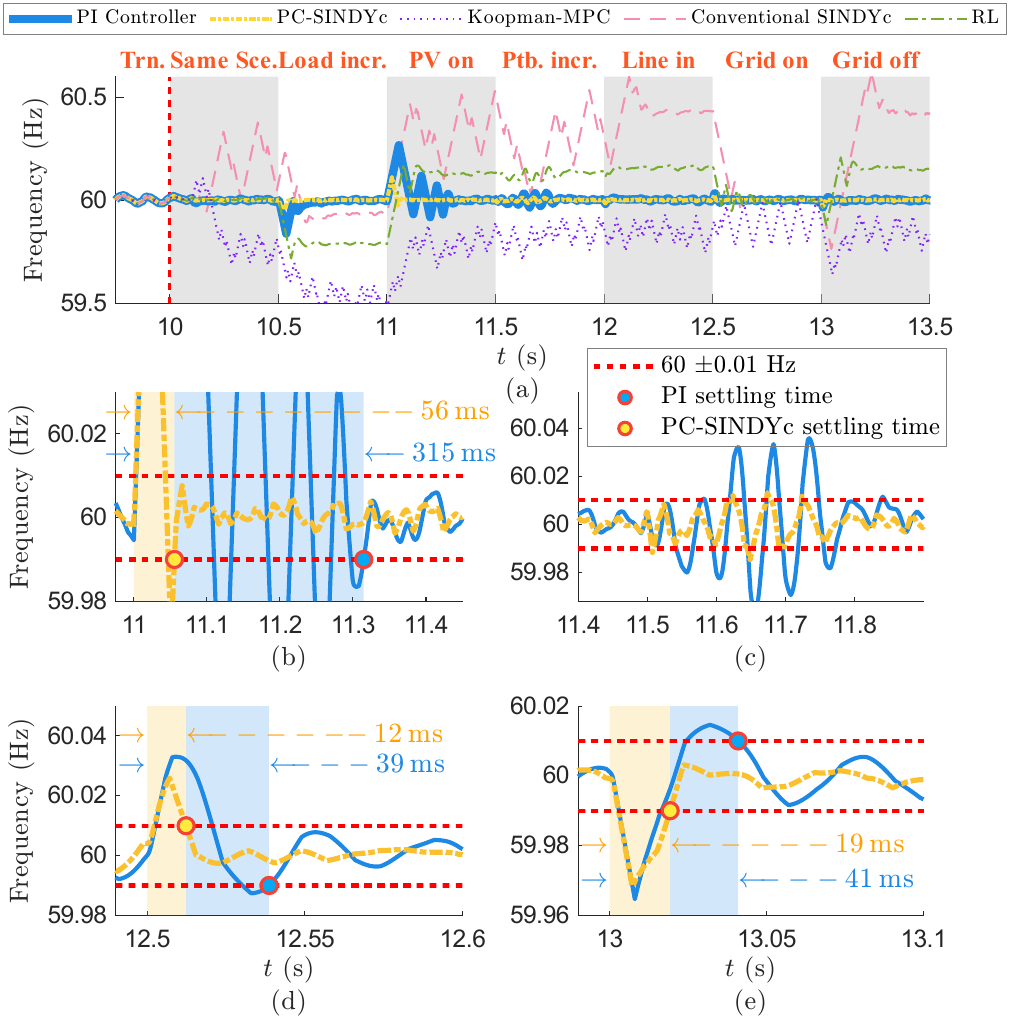}
    \caption{{GFM 1 frequency $f_1$ under various secondary controllers: (a) Overall transient response; zoomed-in views of transients during (b) PV~1 connecting to the MG, (c) load variation, (d) connecting to the distribution grid, and (e) grid disconnection.}}
    \label{fig:4bus-control}
\end{figure}
Fig.~\ref{fig:4bus-control} shows the control performance on the 4-bus MG under the same transient as Section~\ref{sec:sim-prediction}. 
We compare the proposed PC-SINDYc (dash-dotted yellow line in Fig.~\ref{fig:4bus-control}) with {four} benchmark methods: a PI controller~\cite{savaghebi2012} (solid blue line), {the MPC based on Koopman approximation~\cite{arbabi2018} (dotted purple line), }the conventional SINDYc~\cite{nandakumar2023} (dashed pink line), and a state-of-the-art RL controller~\cite{rodriguez-martinez2025} (dash-dotted green line). 
We tune the PI controller {and the weighting matrices for Koopman-based MPC and conventional SINDYc }with our best effort. 

Regarding the RL controller, we implement the TD3 algorithm. To ensure a fair comparison with PC-SINDYc, the TD3 agent uses the same observable space with state and action vectors defined as $s_t = [\btheta_t;\; \bdx_t]$ and $a_t = \bu_t=[\omega^s_{1,t};\; p^s_{1,t}]$, respectively.
Each training episode lasts $T_\mtrn = 10$~s with control step $\Delta T^\mr = 1/120$~s to align with the PMU reporting rate. To match the small-perturbation training conditions used for PC-SINDYc in Section~\ref{sec:method-activation}, we initialize $\omega^s_{1,0} \sim \operatorname{Unif}[0.999,\,1.001]$~p.u. and $p^s_{1,0} \sim \operatorname{Unif}[0.49,\,0.51]$~p.u., and update the actions through small Gaussian increments $\Delta \omega^s_{1,t} \ceq \omega^s_{1,t} - \omega^s_{1,t-\Delta T^\mr} \sim \mathcal{N}(0,\,0.001^2\Delta T^\mr/T_\mtrn)$~p.u. and $\Delta p^s_{1,t} \ceq p^s_{1,t} - p^s_{1,t-\Delta T^\mr} \sim \mathcal{N}(0,\,0.01^2\Delta T^\mr/T_\mtrn)$~p.u., so that the exploratory action trajectories remain small, continuous, and oscillatory throughout each episode. The reward function is defined as $R_t = -\big(\alpha_\omega (\Delta \omega_{1,t})^2 + \alpha_{\omega^s} (\Delta \omega^s_{1,t})^2 + \alpha_{p^s} (\Delta p^s_{1,t})^2\big)$, where $\Delta \omega_{1,t} \ceq \omega_{1,t} - \omega_0$, $\alpha_\omega = 5\times10^5$ p.u., and $\alpha_{\omega^s} = \alpha_{p^s} = 2$ p.u., so that the agent penalizes both frequency deviation and aggressive control variations. We train TD3 for $1000$ episodes in Simulink and evaluate it periodically during training. The detailed TD3 architecture, hyperparameter settings, and training procedure follow \cite{rodriguez-martinez2025}.

For validation, each controller is activated at \(t = 10\)~s and operates with a time step of \(\Delta T^{\mr}\). 
At each time step, the control signal experiences a delay of up to \(9.33\)~ms before the GFM converter acts, accounting for data acquisition~\cite{IEEE2011} and communication latency~\cite{hasan2021}.

Fig.~\ref{fig:4bus-control}(a) shows that both the proposed PC-SINDYc and the PI controller successfully restore and stabilize the frequency with minimal overshoot. 
In contrast, the {Koopman-based MPC and }conventional SINDYc fails to maintain stability due to {their} poor prediction performance demonstrated in the previous subsection. 
{Although both Koopman-based MPC and conventional SINDYc  have demonstrated strong capability in modeling nonlinear dynamics and control, their performance depends strongly on the training conditions. Since they are trained only on small-signal disturbance scenarios, as is PC-SINDYc, they generalize poorly when the system undergoes the large transients considered here. The resulting prediction errors lead to ineffective control.}
Similarly, the RL controller maintains the frequency well over \(t \in [10,10.5)\)~s, when the system configuration stays the same as the training one, indicating that the controller is well-tuned. However, once the unseen large-signal transient occurs at \(t = 10.5\)~s, the learned policy fails to handle the new scenario.

Figs.~\ref{fig:4bus-control}(b)–(e) provide zoomed-in views of key transients, including PV~1 connecting to the MG, load-perturbation intensity increasing by a factor of 10, connecting to the distribution grid, and subsequent disconnection. As seen in Fig.~\ref{fig:4bus-control}(b), PC-SINDYc restores the frequency more smoothly and rapidly than the PI controller, with smaller overshoot and a shorter settling time (PC-SINDYc settles in \(56\)~ms vs. \(315\)~ms for PI). 
In Fig.~\ref{fig:4bus-control}(c), when the stochastic load amplitude increases by 10 times, PC-SINDYc consistently maintains smaller frequency oscillations than the PI controller. Similarly, in Figs.~\ref{fig:4bus-control}(d) and (e), PC-SINDYc yields both smaller overshoot and shorter settling times compared with the PI controller.

\subsection{The Performance of PC-SINDYc in the 13-Bus MG}\label{sec:sim-13-bus} 

\begin{figure}[htbp]
    \centering
    \includegraphics[width=\linewidth]{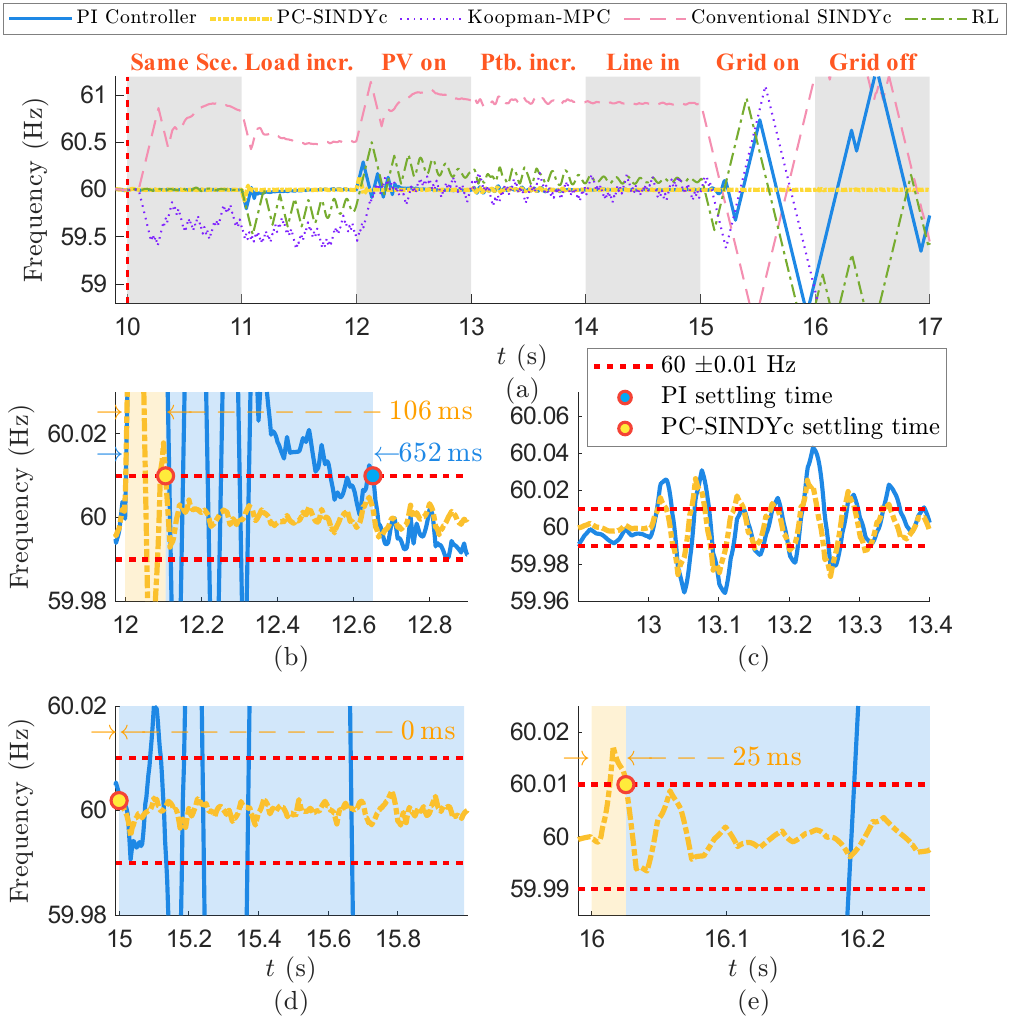}
    \caption{
    {GFM 1 frequency $f_1$ under various secondary controllers: (a) Overall transient response; zoomed-in views of transients during (b) PV~1 connecting to the MG, (c) load-perturbation intensity increasing, (d) connecting to the distribution grid, and (e) grid disconnection.}}
    \label{fig:13bus-control}
\end{figure}
We then compare our proposed PC-SINDYc with the same set of benchmark methods in a medium-voltage (4.16~kV) 13-bus MG with transformers and diesel generators, whose high inductances and inertia create a more complex electromagnetic environment. 
In this 13-bus MG, GFM~1 operates in VSG mode and GFLs~2–4 use single-phase PLLs, to demonstrate the generalization capability of PC-SINDYc.
{We apply the same transients as in Section~\ref{sec:sim-prediction} to the 13-bus MG, except that the duration of each transient is increased from \(0.5\)~s to \(1\)~s.}

Fig.~\ref{fig:13bus-control} illustrates control performance in the 13-bus MG. Similar to the 4-bus case, {Koopman-based MPC, }conventional SINDYc, and the RL controller fail to maintain frequency stability. 
The PI controller restores system frequency during load changes but exhibits larger overshoot, wider oscillation range, and longer settling time than PC-SINDYc (Figs.~\ref{fig:13bus-control}(b)–(c)).
When the MG connects to the distribution grid at \(t = 15\)~s (Fig.~\ref{fig:13bus-control}(d)), PC-SINDYc responds quickly to the new operating condition and maintains the frequency within \(60 \pm 0.01\)~Hz with negligible overshoot. In contrast, the PI controller resonates with the distribution grid and drives system frequency farther from nominal.
When the MG disconnects from the AC grid at \(t = 16\)~s (Fig.~\ref{fig:13bus-control}(e)), PC-SINDYc rapidly rebalances power flow within the MG, achieving smaller overshoot and settling time below \(25\)~ms, whereas the PI controller fails to restore frequency from preceding oscillations.

In summary, the proposed PC-SINDYc framework demonstrates strong generalization capability and robust control performance, achieving smaller overshoot and faster settling times even under unseen large disturbances.

\subsection{DC Dynamics}\label{sec:sim-dc-dynamics}

In practice, DC dynamics can be categorized into two types: small-amplitude but persistent DC-voltage oscillations caused by the DC-voltage control blocks, and large-amplitude, occasional, but sustained power intermittency caused by the stochastic nature of energy sources. 
Persistent DC-voltage oscillations will be reflected in the library, e.g., through $p_i$ for GFM $i\in\cn_\mgfm$ and the phase error $\delta_i$ for GFL $i\in\cn_\mgfl$, but they generally do not trigger the constraints \eqref{equ:constraint} because the constraint sets are intentionally chosen to be sufficiently large during normal operation.
In contrast, power intermittency may cause rapid changes in $p_i$, leading to fast frequency deviations and potential constraint activation. 

During system identification, we inject only small-amplitude oscillatory control signals \eqref{equ:excitation} into the MG over the 10-second excitation period (Section~\ref{sec:method-activation}). As a result, the training data rarely contain constraint-activation events. If power intermittency occurs during this period and activates the constraints \eqref{equ:constraint}, Algorithm~\ref{alg:ca-sindy} detects and rejects the resulting outliers, thereby ensuring accurate estimation of $\Xi$.

During control, power intermittency in GFLs does not affect frequency regulation, as only GFMs participate in control (Section~\ref{sec:method-mpc}). Power intermittency in GFMs is also rare, as they are typically supported by sufficient power reserves. In extreme cases where a GFM reaches its power limit, frequency regulation cannot be maintained by any control methods due to insufficient source power. 

To demonstrate these, we conduct simulations on the 4-bus MG in Fig.~\ref{fig:MG-circuit}(a). First, to emulate DC-voltage oscillations,
the constant DC sources are replaced with controlled sources whose voltages oscillate with amplitude 0.1~p.u. around 1~p.u. over the interval $t\in\ct_\mdc=[11,12]$~s:
\begin{align}
v^\mdc_{i,t}
&= 1+\psi(t;0.1,50,50\pi) &&i \in \cn,\ t \in \ct_\mdc.
\end{align}
Second, to model power intermittency, we reduce the source power of GFL~2 to $p_2^{\msrc}=0.1$~p.u. over $t\in[12,13]$~s, and the source power of GFM~1 to $p_1^{\msrc}=0.1$~p.u. over $t\in[13,14]$~s.
\begin{figure}[htbp]
    \centering
    \includegraphics[width=\linewidth]{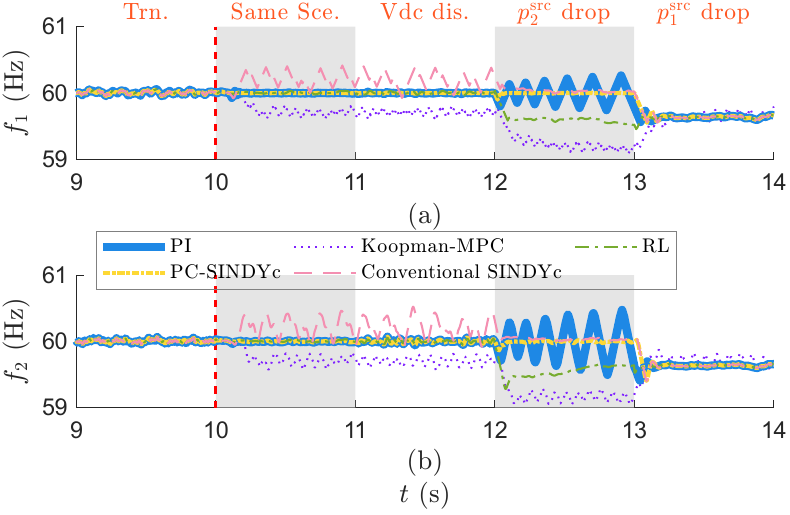}
    \caption{Frequency regulation under unseen DC dynamics: (a) GFM frequency $f_1$ and (b) GFL frequency $f_2$ under PC-SINDYc, PI control, {Koopman-based MPC, }conventional SINDYc, and RL control.}
    \label{fig:resp-3.2-control}
\end{figure}

Fig.~\ref{fig:resp-3.2-control} (a) and (b) compares PC-SINDYc, PI control, {Koopman-based MPC, }conventional SINDYc, and RL control under the same DC transients. Like PC-SINDYc, the RL controller is trained only on small-signal disturbances and does not encounter these DC dynamics during training. The controllers are activated at $t=10.1$~s. Over $t\in[10,11]$~s, {Koopman-based MPC and }conventional SINDYc already drives the system frequency to oscillate, even though the disturbance remains close to the training condition. During the DC-voltage oscillation interval $t\in[11,12]$~s, both PI and RL controllers regulate frequency satisfactorily. However, when the source power of GFL~2 drops over $t\in[12,13]$~s, the RL controller can no longer stabilize the frequency. The PI controller also begins to oscillate in this interval and drives the system toward instability. The triangle-wave pattern in the frequency traces is caused by activation of the frequency rate limiters. In contrast, PC-SINDYc enables GFM 1 to provide stronger support over $t\in[12,13]$~s and maintain stability, whereas PI control induces larger power oscillations.

When the source power of GFM~1 drops over $t\in[13,14]$~s, GFM~1 can no longer provide enough power to restore the frequency to the nominal value of 60~Hz. As a result, the system frequency declines, and none of the control methods can recover it during the power shortage. %

Overall, these results show that PC-SINDYc remain effective under unseen DC dynamics, including DC-voltage oscillations and power intermittency, until the available source power of the GFM itself becomes insufficient for frequency regulation.

\subsection{PC-SINDYc Performance in a MG with Multiple GFMs and GFLs, with and without Missing PMUs}\label{sec:sim-multiple-ders}

To demonstrate that the proposed PC-SINDYc coordinates effectively among multiple DERs in an MG with many GFMs and GFLs, we consider a modified WSCC 9-bus MG \cite{heins2023}, with 3 GFM converters installed at buses 1, 2, and 3 and 5 GFL converters installed at buses 4--8, as shown in Fig.~\ref{fig:resp-2.8-grid}.

\begin{figure}[htbp]
    \centering
    \includegraphics[width=\linewidth]{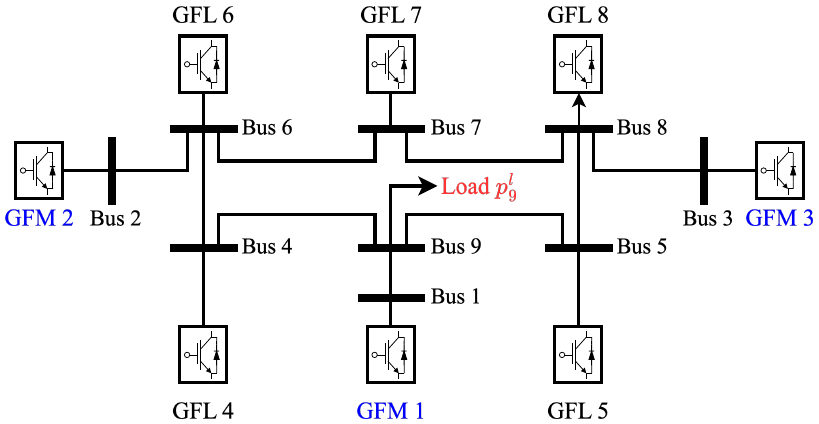}
    \caption{Modified WSCC 9-bus MG with 3 GFMs (buses 1--3) and 5 GFLs (buses 4--8).}
    \label{fig:resp-2.8-grid}
\end{figure}

The dynamic model is identified from the first 10~s of PMU measurements collected under small-signal excitations.
At $t = 10.5$~s, the load $p^l_{9}$ at bus~9 (red in Fig.~\ref{fig:resp-2.8-grid}) steps up by $1$~p.u., creating a large-signal disturbance.
\begin{figure}[htbp]
    \centering
    \includegraphics[width=\linewidth]{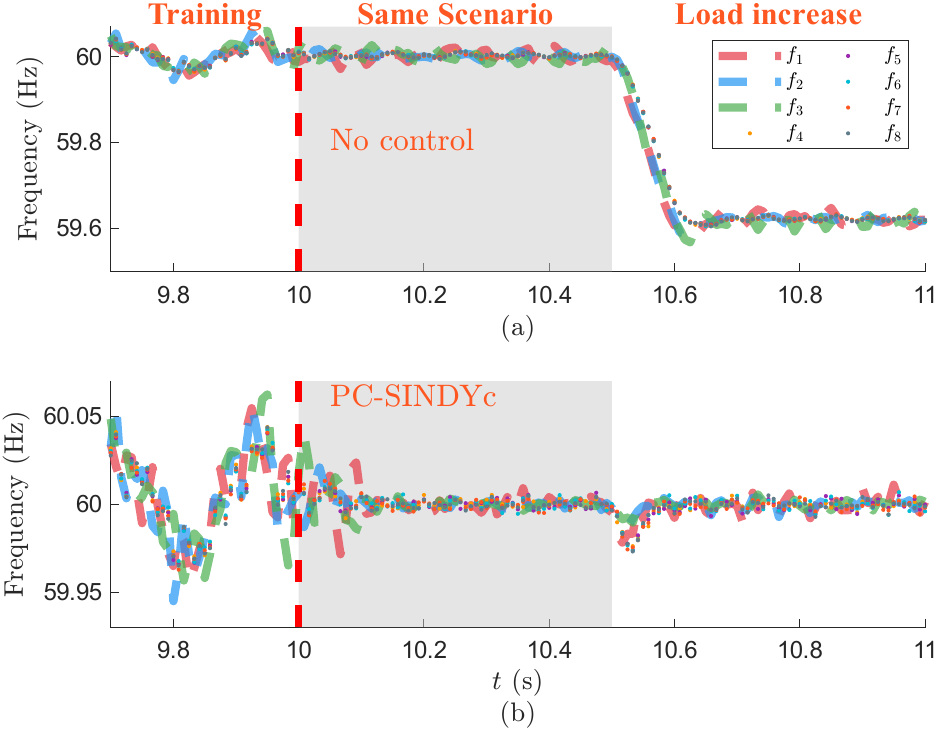}
    \caption{Frequency responses of the modified WSCC 9-bus MG under a $1$~p.u. load step at bus~9: (a) without control and (b) with PC-SINDYc activated at $t = 10.1$~s. The GFM frequencies $f_1$--$f_3$ are shown with dashed lines, and the GFL frequencies $f_4$--$f_8$ are shown with dotted lines.}
    \label{fig:resp-2.8-8DERs}
\end{figure}

Fig.~\ref{fig:resp-2.8-8DERs}(a) shows the frequency response without control, where the GFM frequencies $f_1$, $f_2$, and $f_3$ are plotted with dashed lines and the GFL frequencies $f_4$--$f_8$ with dotted lines. Without control, the system frequency drops to about 59.6~Hz.
Fig.~\ref{fig:resp-2.8-8DERs}(b) shows the frequency response when the MPC controller is activated at $t = 10.1$~s. With PC-SINDYc activated, only very small oscillations appear during the large-signal transient. The three GFM converters maintain stable frequency regulation with small oscillations, which demonstrates the scalability of the proposed framework.

\subsubsection{With Missing PMUs}\label{sec:sim-missing-pmu}
In real-world implementations, not every DER may be equipped with a PMU.
If DER $i$ lacks a PMU at its PCC, we cannot identify its dynamic model.
However, PC-SINDYc can still control the PMU-equipped DERs in the MG, provided at least one GFM is equipped with a PMU.

We reuse the modified WSCC 9-bus MG shown in Fig.~\ref{fig:resp-2.8-grid}, where PMUs are missing at GFM~3 and GFLs~4 and~7.
Following the same procedure as above, we apply the same large-signal load increase at $t=10.5$~s and activate PC-SINDYc on GFMs~1 and~2 at $t=10.1$~s for frequency regulation.

\begin{figure}[htbp]
    \centering
    \includegraphics[width=\linewidth]{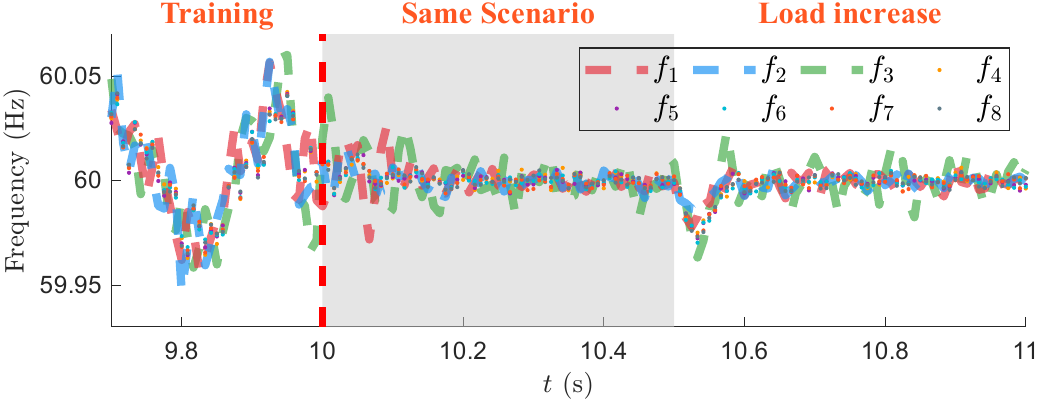}
    \caption{
    Closed-loop frequency response of the modified WSCC 9-bus MG under PC-SINDYc with missing PMUs}
    \label{fig:resp-1.7-8DERs}
\end{figure}

After the load increase at $t = 10.5$~s, the frequencies of GFMs~1 and~2, $f_1$ and $f_2$ (dashed red and blue), quickly restore the system frequency to 60~Hz.
The frequency of GFM~3, $f_3$ (dashed green), also returns to 60~Hz, but with slightly larger oscillations than in Fig.~\ref{fig:resp-2.8-8DERs}(b) because it does not participate in frequency control.
The GFL frequencies $f_4$--$f_8$ (dotted) also gradually return to the nominal value, despite the absence of PMUs at GFLs 4 and 7.  %

Overall, GFMs~1 and~2 maintain stable frequency regulation, confirming that PC-SINDYc remains effective under incomplete PMU coverage.

\section{Conclusion}\label{sec:conclusion}
This paper proposed PC-SINDYc, a novel framework for identifying the true frequency-dynamics model of DER-based MGs and performing real-time frequency control. The framework requires only practical PMU data and does not rely on prior knowledge of network parameters. We proved that, under mild conditions, PC-SINDYc ensures asymptotic stability of MGs. Case studies on 4-bus, 9-bus, and 13-bus MGs demonstrated the accuracy of PC-SINDYc in identifying the true frequency-dynamics models. Furthermore, PC-SINDYc achieved effective and stable frequency control under various disturbances beyond the offline identification/training conditions. 

{
The main limitations of the proposed PC-SINDYc are as follows. First, PMU deployment remains a key requirement: each DER must be equipped with a PMU to be both identifiable and controllable (although PC-SINDYc remains applicable to PMU-equipped DERs even when PMUs are missing for some DERs; see Section~\ref{sec:sim-missing-pmu}). Second, the method relies on access to high-frequency internal PMU samples (Section~\ref{sec:PMU}). Third, the stability guarantee in Theorem~\ref{theorem:1} is established for the identified model rather than the original MG dynamics (Remark~\ref{remark:theorem-limitation}).
Nevertheless, the case studies show that the proposed method remains effective under practical PMU noise, partial PMU availability (Section~\ref{sec:sim-missing-pmu}), 
and moderate model mismatch. Future work will extend the framework to voltage control and develop a more rigorous stability analysis in the presence of model mismatch, as discussed in Remark~\ref{remark:theorem-limitation}.
}
\appendices

\section{{PLL Equivalence}} \label{app:PLL-equiv}
In this appendix, we prove the equivalence between \eqref{equ:3p-pll-vq-vq} and \eqref{equ:1p-pll-vq-vq} 
under practical conditions. 
Recall equations \eqref{equ:3p-pll-vq-vq} and \eqref{equ:1p-pll-vq-vq}:
\begin{subequations}
\begin{align}
& \delta^{{\mtp}} = \mathbf{T}_{q}({\theta})\, \bm{v}^{\ml abc}, \tag{\ref{equ:3p-pll-vq-vq}} \\
& \delta^{{\msp}} (t) = -\frac{2}{T(t)}\int_{t-T(t)}^{t} v^{\ml a}(\tau) \sin(\theta(\tau)) \, \md\tau,  \tag{\ref{equ:1p-pll-vq-vq}}
\end{align}
\end{subequations}
where for brevity, we drop the subscript $(\cdot)_i$ and use superscripts $(\cdot)^{{\mtp}}$ and $(\cdot)^{{\msp}}$ to distinguish the three-phase PLL from the single-phase PLL. 

Equation~\eqref{equ:3p-pll-vq-vq} is equivalent to \eqref{equ:1p-pll-vq-vq} when the the three-phase voltage $\bv^{abc}$ is balanced and the PLL operates in the lock-in region. We formalize these conditions in Assumptions~\ref{assum:balanced} and \ref{assum:quasi-static}:
\begin{assumption}[Balanced voltage]\label{assum:balanced}
The three-phase voltage $\bv^{abc}$ is balanced:
\begin{equation}\label{equ:assum-balanced}
    \bm{v}^{abc} = V_m\big[\cos(\theta_0),\ \cos\!\left(\theta_0 - \tfrac{2\pi}{3}\right),\ \cos\!\left(\theta_0 + \tfrac{2\pi}{3}\right)\big]^\top,
\end{equation}
where $V_m > 0$ denotes the peak amplitude and $\theta_0(t)$ denotes true phase angle.
\end{assumption}
\begin{assumption}[Lock-in region]\label{assum:quasi-static}
The PLL operates in the lock-in region. In this regime, the phase error $\Delta\theta \ceq \theta_0-\theta$ remains within a small range with no oscillations for at least one slipping cycle (Section 8.2.3 of \cite{gardner2005}). Consequently, $\Delta\theta$ can be regarded as a constant over the slipping cycle $[t-T(\theta(t)),\,t]$: 
\begin{equation}\label{equ:assum-quasi-static}
    \Delta\theta(\tau) = \Delta\theta(t), \quad \forall\, \tau \in [t-T(\theta(t)),\,t].
\end{equation}
\end{assumption}

\begin{proposition}\label{prop:pll-equiv}
    Under Assumptions~\ref{assum:balanced}--\ref{assum:quasi-static}, 
    \eqref{equ:3p-pll-vq-vq} is equivalent to \eqref{equ:1p-pll-vq-vq}, 
    i.e., $\delta^{{\mtp}}(t) = \delta^{{\msp}}(t)$.
\end{proposition}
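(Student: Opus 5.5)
We compute both sides in closed form under Assumptions~\ref{assum:balanced}--\ref{assum:quasi-static} and check that both reduce to $V_m\sin(\Delta\theta(t))$, up to a common sign convention.

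\textbf{Three-phase side.} Substitute \eqref{equ:assum-balanced} into \eqref{equ:3p-pll-vq-vq} using the definition \eqref{equ:3p-pll-vq-T}. This gives
\begin{equation*}
\delta^{\mtp} = -\tfrac{2}{3}V_m\textstyle\sum_{k\in\{0,-1,1\}} \sin\!\big(\theta+\tfrac{2\pi k}{3}\big)\cos\!\big(\theta_0+\tfrac{2\pi k}{3}\big).
\end{equation*}
Apply the product-to-sum identity $\sin a\cos b=\tfrac12[\sin(a-b)+\sin(a+b)]$. The difference terms are all equal to $\sin(\theta-\theta_0)$, so they add up to $3\sin(\theta-\theta_0)$. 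The sum terms are $\sin(\theta+\theta_0+4\pi k/3)$ for $k\in\{0,-1,1\}$; these are three sinusoids equally spaced by $2\pi/3$, so they cancel. Therefore
\begin{equation*}
\delta^{\mtp}(t) = -V_m\sin(\theta(t)-\theta_0(t)) = V_m\sin(\Delta\theta(t)).
\end{equation*}

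\textbf{Single-phase side.} The integrand of \eqref{equ:1p-pll-vq-vq} is $V_m\cos(\theta_0)\sin(\theta)$, which expands to
\begin{equation*}
\tfrac{V_m}{2}\big[\sin(\theta-\theta_0)+\sin(\theta+\theta_0)\big].
\end{equation*}
By \eqref{equ:assum-quasi-static}, on $[t-T,t]$ the first term equals the constant $-\sin(\Delta\theta(t))$. Its average over the window, multiplied by $-2$, therefore gives $V_m\sin(\Delta\theta(t))$, which matches $\delta^{\mtp}(t)$.

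\textbf{Main obstacle: the double-frequency term.} It remains to show that
\begin{equation*}
\int_{t-T}^{t}\sin(\theta+\theta_0)\,\md\tau=\int_{t-T}^{t}\sin(2\theta+\Delta\theta(t))\,\md\tau
\end{equation*}
vanishes. This is a full-window average of a double-frequency oscillation, and it is zero exactly when $\theta$ advances by precisely $2\pi$ over the window with constant rate. My plan is to read Assumption~\ref{assum:quasi-static} together with the definition \eqref{equ:1p-pll-vq-T} of $T(\theta)$ as asserting that, over one slipping cycle, $\dot\theta$ is constant and equal to $\dot\theta(t)$. Then $\theta(\tau)=\theta(t)-\dot\theta(t)(t-\tau)$, and on $[t-T,t]$ the argument $2\theta$ sweeps exactly $4\pi$. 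The integral is then the integral of a sine over two full periods, which is zero.

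If the assumption is instead only taken to freeze $\Delta\theta$, I would use the change of variables $s=\theta(\tau)$, with $\md\tau=\md s/\dot\theta$. Under the lock-in regime the rate $\dot\theta$ is approximately constant, so the integral again becomes a full-period average of $\sin(2s+\mathrm{const})$, which is zero. I would state this explicitly as the precise content of the lock-in assumption being used. Combining the three steps gives $\delta^{\mtp}(t)=\delta^{\msp}(t)$.
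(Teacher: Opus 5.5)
Your proposal is correct and follows the paper's proof essentially step for step: product-to-sum on both sides, cancellation of the three equally spaced sum terms for the three-phase PLL, freezing $\Delta\theta$ via Assumption~\ref{assum:quasi-static}, and discarding the double-frequency integral. The paper simply asserts that this integral is zero, so your reading of the lock-in assumption as also fixing $\dot\theta\equiv\dot\theta(t)$ on the window supplies exactly the ingredient that step needs but the paper does not state (your ``approximately constant'' fallback would only give approximate equality). Your common value $V_m\sin(\Delta\theta(t))$ is the right one; the paper's $-V_m\sin(\Delta\theta(t))$ comes from writing $\sin(\theta_0-\theta)$ where its identity \eqref{requ:product2sum} yields $\sin(\theta-\theta_0)$, a slip made identically on both sides and therefore harmless to the equality.
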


\begin{proof}
Regarding the three-phase PLL, we have
\begin{flalign}
    \delta^{\mtp}
        &= \mathbf{T}_q({\theta})\,\bm{v}^{abc} \hspace*{-5cm} &&\nonumber \\
        &\overset{\ref{assum:balanced}}{=}
            -\frac{2V_m}{3}\!\bigg(
                \sin(\theta)\cos(\theta_0) \nonumber + \sin\!\left(\theta-\tfrac{2\pi}{3}\right)\cos\!\left(\theta_0-\tfrac{2\pi}{3}\right) \hspace*{-5cm} && \nonumber \\
        & && + \sin\!\left(\theta+\tfrac{2\pi}{3}\right)\cos\!\left(\theta_0+\tfrac{2\pi}{3}\right)
            \bigg) \nonumber \\
        &\overset{\eqref{requ:product2sum}}{=}
            -\frac{V_m}{3}\bigg(
                \sin(\theta_0+\theta)+\sin(\theta_0-\theta) \hspace*{-5cm} \nonumber \\
        &\hspace{6em}
                +\sin\!\left(\theta_0+\theta-\tfrac{4\pi}{3}\right)+\sin(\theta_0-\theta) \hspace*{-5cm} \nonumber \\
        &\hspace{9em}
                +\sin\!\left(\theta_0+\theta+\tfrac{4\pi}{3}\right)+\sin(\theta_0-\theta)
            \bigg) \hspace*{-5cm} \nonumber \\
        &= -\frac{V_m}{3}\bigg(
                \underbrace{
                    \sin(\theta_0\!+\!\theta)
                    \!+\!\sin\!\left(\theta_0\!+\!\theta-\tfrac{4\pi}{3}\right)
                    \!+\!\sin\!\left(\theta_0\!+\!\theta\!+\!\tfrac{4\pi}{3}\right)
                }_{=\,0} \hspace*{-5cm} \nonumber \\
        &\hspace{17em}
                + 3\sin(\theta_0-\theta)\bigg) \hspace*{-5cm} \nonumber \\
        &= -V_m\sin(\Delta\theta), \hspace*{-5cm} \label{eq:delta-tp-final}
\end{flalign}
where \eqref{requ:product2sum} is the product-to-sum identity
\begin{equation}\label{requ:product2sum}
    \cos\!A\sin\!B = \tfrac{1}{2}[\sin(A+B)+\sin(B-A)].
\end{equation}

Regarding the single-phase PLL, we have
    \begin{flalign}
        \delta^{\msp}(t) &\overset{\ref{assum:balanced}}{=}
            -\frac{2V_m}{T(t)}\int_{t-T(t)}^{t} \cos(\theta_0(\tau)) \sin(\theta(\tau)) \,\md \tau \hspace*{-5cm}&&\notag\\
        &= -\frac{V_m}{T(t)}\int_{t-T(t)}^{t}
            \Big[\sin\!\big(\theta_0(\tau)+\theta(\tau)\big) \hspace*{-5cm}&&\notag\\
        & &&+\sin\!\big(\theta_0(\tau)-\theta(\tau)\big)\Big]\,\md \tau \notag\\
        &= -\frac{V_m}{T(t)}\int_{t-T(t)}^{t} \sin\!\big(2\theta(\tau)+\Delta\theta(\tau)\big) \,\md \tau \hspace*{-5cm}&& \notag\\
        & &&
            -\frac{V_m}{T(t)}\int_{t-T(t)}^{t} \sin\!\big(\Delta\theta(\tau)\big) \,\md \tau \notag\\
        &\overset{{\ref{assum:quasi-static}}}{=}
            -\frac{V_m}{T(t)}\underbrace{\int_{t-T(t)}^{t} \sin\!\big(2\theta(\tau)+\Delta\theta({t})\big) \,\md \tau}_{=0} \hspace*{-5cm}&&\notag\\
        & && 
            -\frac{V_m}{T(t)}\underbrace{\int_{t-T(t)}^{t} \sin\!\big(\Delta\theta({t})\big) \,\md \tau}_{=T(t)\sin\!\big(\Delta\theta(t)\big)} \notag\\
        &= -V_m\sin(\Delta\theta(t)). \hspace*{-5cm}&&\label{eq:delta-sp-final}
    \end{flalign}

As such, we have 
\begin{equation}
    \delta^{\mtp}(t) \overset{\eqref{eq:delta-tp-final}}{=} -V_m\sin(\Delta\theta(t)) \overset{\eqref{eq:delta-sp-final}}{=}\delta^{\msp}(t),
    \label{eq:1p-eq-3p}
\end{equation}
which proves Proposition~\ref{prop:pll-equiv}.
\end{proof}

\bibliography{utils/SINDy}
\bibliographystyle{IEEEtran}

\end{document}

%% file: utils/preamble.tex
\usepackage{amsmath,amsfonts,amssymb,mathtools}
\usepackage{bm,upgreek}
\usepackage{amsthm}
\usepackage{physics}
\allowdisplaybreaks

\usepackage{graphicx,scalerel}
\usepackage{array,booktabs,multirow,colortbl}
\usepackage[caption=false,font=normalsize,labelfont=sf,textfont=sf]{subfig}
\usepackage[inkscapepath=./.out/svg-inkscape/]{svg}

\usepackage{algorithm}
\usepackage[noend]{algpseudocode}

\usepackage{enumitem}
\usepackage{xcolor}
\usepackage{tcolorbox}

\usepackage{textcomp,stfloats,url}
\usepackage{verbatim}
\usepackage{cite}

\usepackage[normalem]{ulem}
\usepackage[hidelinks]{hyperref}
\usepackage{orcidlink}

\usepackage[compact]{titlesec}
\titlespacing{\section}{0pt}{*0.4}{*0.4}
\titlespacing{\subsection}{0pt}{*0.3}{*0.3}
\titleformat{\subsubsection}[runin]
  {\normalfont\normalsize\itshape}
  {\arabic{subsubsection})}
  {0.5em}
  {}
\titlespacing*{\subsubsection}{2pt}{*0.3}{0.5em}

\setlist[enumerate]{itemsep=0.3ex, topsep=0.3ex}

\makeatletter
\def\thm@space@setup{%
  \thm@preskip=3pt
  \thm@postskip=3pt
}
\makeatother

%% file: utils/macros.tex
\newcommand{\Output}{\item[\textbf{Output:}]}
\newcommand{\Inputs}{\item[\textbf{Inputs:}]}

\makeatletter
\newcommand{\StateH}[1]{%
  \State\leavevmode\begingroup
  \setlength{\fboxsep}{0pt}%
  \colorbox{gray!20}{%
    \parbox{\dimexpr\linewidth-\ALG@thistlm\relax}{\strut#1}%
  }%
  \endgroup
}
\makeatother

\newcommand{\mint}{\mathrm{int}}

\newcommand{\hXi}{\hat{\Xi}}

\newcommand{\mload}{\mathrm{load}}
\newcommand{\msrc}{\mathrm{src}}
\newcommand{\mdc}{\mathrm{dc}}

\newcommand{\cceq}{\coloneqq}
\newcommand{\ceq}{:=}
\newcommand{\md}{\mathrm{d}}

\newcommand{\bx}{\bm{x}}
\newcommand{\bdx}{\dot{\bm{x}}}
\newcommand{\bu}{\bm{u}}
\newcommand{\by}{\bm{y}}

\newcommand{\be}{\bm{e}}
\newcommand{\br}{\bm{r}}

\newcommand{\bkappa}{\bm{\kappa}}

\newcommand{\btheta}{\bm{\vartheta}}

\newcommand{\bv}{\bm{v}}

\newcommand{\bF}{\bm{F}}

\newcommand{\bX}{\bm{X}}
\newcommand{\bdX}{\dot{\bm{X}}}
\newcommand{\bU}{\bm{U}}

\newcommand{\bTheta}{\bm{\Theta}}

\newcommand{\bQ}{\bm{Q}}
\newcommand{\bR}{\bm{R}}
\newcommand{\bP}{\bm{P}}

\newcommand{\tbx}{\tilde{\bx}}
\newcommand{\tx}{\tilde{x}}
\newcommand{\tbu}{\tilde{\bu}}
\newcommand{\tbtheta}{\tilde{\btheta}}
\newcommand{\tXi}{\tilde{\Xi}}
\newcommand{\tcx}{\tilde{\cx}}
\newcommand{\tcu}{\tilde{\cu}}
\newcommand{\tbF}{\tilde{\bF}}

\newcommand{\bbn}{\mathbb{N}}
\newcommand{\bfi}{\mathbf{I}}

\newcommand{\ct}{\mathcal{T}}
\newcommand{\cn}{\mathcal{N}}
\newcommand{\cx}{\mathcal{X}}
\newcommand{\cu}{\mathcal{U}}

\newcommand{\mr}{\mathrm{r}}
\newcommand{\ms}{\mathrm{s}}

\newcommand{\ml}{\mathrm{}}

\newcommand{\mtrn}{\mathrm{trn}}
\newcommand{\mgfm}{\mathrm{GFM}}
\newcommand{\mgfl}{\mathrm{GFL}}
\newcommand{\mvsg}{\mathrm{vsg}}
\newcommand{\mdroop}{\mathrm{droop}}
\newcommand{\mtp}{\mathrm{3p}}
\newcommand{\msp}{\mathrm{1p}}
\newcommand{\mref}{\mathrm{ref}}

\DeclareMathOperator{\diag}{diag}

\newtheoremstyle{remarkstyle}%
  {0pt}{0pt}{}{\parindent}%
  {\bfseries\itshape}{:}{ }%
  {\bfseries\itshape\thmname{#1}\thmnumber{ #2}\thmnote{\textemdash #3}}

\theoremstyle{remarkstyle}
\newtheorem{remark}{Remark}

\theoremstyle{plain}
\newtheorem{theorem}{Theorem}
\newtheorem{lemma}{Lemma}
\newtheorem{proposition}{Proposition}
\newtheoremstyle{assumstyle}%
  {0pt}{0pt}{}{\parindent}%
  {\bfseries}{.}{ }%
  {\thmnumber{#2}\thmnote{ (#3)}}
\theoremstyle{assumstyle}
\newtheorem{assumption}{A}

\newcounter{notes}

\definecolor{modifycolor}{HTML}{0070C0}